   \def\@citecolor{blue}%
   \def\@urlcolor{blue}%
   \def\@linkcolor{blue}%
\def\orcidID#1{\smash{\href{http://orcid.org/#1}{\protect\raisebox{-1.25pt}{\protect\includegraphics{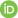}}}}}
\newcommand{\cat}[1]{\mathbf{#1}}
\newcommand{\id}[1][]{\ensuremath{\mathrm{id}_{#1}}}
\newcommand{\tuple}[1]{\mathopen{\langle}#1\mathclose{\rangle}}
\newcommand{\ie}{\text{i.e.,}\xspace}
\newcommand{\eg}{\text{e.g.,}\xspace}
\DeclareMathOperator{\dom}{dom}
\newcommand{\Set}{\cat{Set}}
\newcommand{\A}{\mathcal{A}}
\newcommand{\CC}{\cat{C}}
\newcommand{\res}{\int}
\newcommand{\CF}{\CC_F}
\newcommand{\vect}[1]{\bar{#1}}
\newcommand{\f}{\vect{f}}
\tikzset{>={To[length=2.5pt,width=4pt]}}
\newenvironment{pic}[1][]
{\begin{aligned}\begin{tikzpicture}[font=\tiny,#1]}
{\end{tikzpicture}\end{aligned}}
\def\thickness{0.7pt}
    \gdef\node@@on@layer{%
      \setbox\tikz@tempbox=\hbox\bgroup\pgfonlayer{#1}\unhbox\tikz@tempbox\endpgfonlayer\pgfsetlinewidth{\thickness}\egroup}
\def\node@on@layer{\aftergroup\node@@on@layer}
\tikzstyle{braid}=[double=black, line width=3*\thickness, double distance=\thickness, draw=white, text=black]
\tikzset{every picture/.style={line width=\thickness, draw=black}}
\tikzstyle{pure}=[line width=.7pt]
\tikzstyle{string}=[line width=\thickness]
\tikzstyle{scalar}=[circle, inner sep=0pt, minimum width=15pt, draw, line width=\thickness, fill=white]
\tikzstyle{dot}=[circle, draw=black, fill=black!25, inner sep=.4ex, line width=\thickness, node on layer=foreground]
\tikzstyle{blackdot}=[circle, draw=black, fill=black!75, inner sep=.4ex, line width=\thickness, node on layer=foreground, text=white]
\tikzstyle{whitedot}=[circle, draw=black, fill=white, inner sep=.4ex, line width=\thickness, node on layer=foreground]
\tikzstyle{mixedmorphism}=[morphism, minimum width=30pt, minimum height=16pt, draw, font=\small, inner sep=0pt, fill=white, line width=\thickness,rounded corners=1ex]
\tikzstyle{triangle} = [regular polygon, regular polygon sides=3, draw=black, fill=black!20,scale=0.4, node on layer=foreground]
\tikzstyle{whitetriangle}=[triangle, fill=white]
\tikzstyle{greytriangle}=[triangle, fill=black!20]
\tikzstyle{darkgreytriangle}=[triangle, fill=black!50]
\tikzstyle{blacktriangle}=[triangle, fill=black]
\tikzstyle{invertedtriangle} = [triangle,scale=-1]
\tikzstyle{whiteinvertedtriangle}=[invertedtriangle, fill=white]
\tikzstyle{greyinvertedtriangle}=[invertedtriangle, fill=black!20]
\tikzstyle{darkgreyinvertedtriangle}=[invertedtriangle, fill=black!50]
\tikzstyle{blackinvertedtriangle}=[invertedtriangle, fill=black]
\tikzset{functor1/.style={gray!50}}
\tikzset{functor2/.style={gray!50}}
\tikzstyle{thick}=[line width=\thickness]
\tikzstyle{tiny}=[font=\tiny]
\tikzset{circlelabel/.style={draw, thick, circle, inner sep=-5pt,
 fill=white, minimum width=14pt, fill opacity=1, node on layer=foreground, font=\scriptsize}}
\tikzset{shade 1 transparent/.style={fill=black!20, fill opacity=0.8}}
\tikzset{shade 2 transparent/.style={fill=black!35, fill opacity=0.8}}
\tikzset{shade 3 transparent/.style={fill=black!50, fill opacity=0.8}}
\tikzset{shade 4 transparent/.style={fill=black!65, fill opacity=0.8}}
\tikzset{shade 1/.style={fill=black!16, fill opacity=1}}
\tikzset{shade 2/.style={fill=black!28, fill opacity=1}}
\tikzset{shade 3/.style={fill=black!40, fill opacity=1}}
\tikzset{shade 4/.style={fill=black!52, fill opacity=1}}
\def\strarr{line width=0.7pt, length=4pt, width=5pt, color=black}
\tikzset{arrow/.style={decoration={
    markings,
    mark=at position #1 with \arrow{>[\strarr]}},
    postaction=decorate},
    reverse arrow/.style={decoration={
    markings,
    mark=at position #1 with {{\arrow{<[\strarr]}}}},
    postaction=decorate}
}
\tikzset{overbrace/.style={
     decoration={brace},
     decorate}
}
\tikzset{underbrace/.style={
     decoration={brace, mirror},
     decorate}
}
\newif\ifblack\pgfkeys{/tikz/black/.is if=black}
\newif\ifwedge\pgfkeys{/tikz/wedge/.is if=wedge}
\newif\ifvflip\pgfkeys{/tikz/vflip/.is if=vflip}
\newif\ifhflip\pgfkeys{/tikz/hflip/.is if=hflip}
\newif\ifhvflip\pgfkeys{/tikz/hvflip/.is if=hvflip}
\newif\ifconnectsw\pgfkeys{/tikz/connect sw/.is if=connectsw}
\newif\ifconnectse\pgfkeys{/tikz/connect se/.is if=connectse}
\newif\ifconnectn\pgfkeys{/tikz/connect n/.is if=connectn}
\newif\ifconnects\pgfkeys{/tikz/connect s/.is if=connects}
\newif\ifconnectnw\pgfkeys{/tikz/connect nw/.is if=connectnw}
\newif\ifconnectne\pgfkeys{/tikz/connect ne/.is if=connectne}
\newif\ifconnectnwf\pgfkeys{/tikz/connect nw >/.is if=connectnwf}
\newif\ifconnectnef\pgfkeys{/tikz/connect ne >/.is if=connectnef}
\newif\ifconnectswf\pgfkeys{/tikz/connect sw >/.is if=connectswf}
\newif\ifconnectsef\pgfkeys{/tikz/connect se >/.is if=connectsef}
\newif\ifconnectnf\pgfkeys{/tikz/connect n >/.is if=connectnf}
\newif\ifconnectsf\pgfkeys{/tikz/connect s >/.is if=connectsf}
\newif\ifconnectnwr\pgfkeys{/tikz/connect nw </.is if=connectnwr}
\newif\ifconnectner\pgfkeys{/tikz/connect ne </.is if=connectner}
\newif\ifconnectswr\pgfkeys{/tikz/connect sw </.is if=connectswr}
\newif\ifconnectser\pgfkeys{/tikz/connect se </.is if=connectser}
\newif\ifconnectnr\pgfkeys{/tikz/connect n </.is if=connectnr}
\newif\ifconnectsr\pgfkeys{/tikz/connect s </.is if=connectsr}
\tikzset{keylengthnw/.initial=\connectheight}
\tikzset{keylengthn/.initial =\connectheight}
\tikzset{keylengthne/.initial=\connectheight}
\tikzset{keylengthsw/.initial=\connectheight}
\tikzset{keylengths/.initial =\connectheight}
\tikzset{keylengthse/.initial=\connectheight}
\tikzset{connect nw length/.style={connect nw=true, keylengthnw={#1}}}
\tikzset{connect n length/.style ={connect n =true, keylengthn ={#1}}}
\tikzset{connect ne length/.style={connect ne=true, keylengthne={#1}}}
\tikzset{connect sw length/.style={connect sw=true, keylengthsw={#1}}}
\tikzset{connect s length/.style ={connect s =true, keylengths ={#1}}}
\tikzset{connect se length/.style={connect se=true, keylengthse={#1}}}
\tikzset{connect nw < length/.style={connect nw <=true, keylengthnw={#1}}}
\tikzset{connect n < length/.style ={connect n <=true,  keylengthn ={#1}}}
\tikzset{connect ne < length/.style={connect ne <=true, keylengthne={#1}}}
\tikzset{connect sw < length/.style={connect sw <=true, keylengthnw={#1}}}
\tikzset{connect s < length/.style ={connect s <=true,  keylengths ={#1}}}
\tikzset{connect se < length/.style={connect se <=true, keylengthse={#1}}}
\tikzset{connect nw > length/.style={connect nw >=true, keylengthnw={#1}}}
\tikzset{connect n > length/.style ={connect n >=true,  keylengthn ={#1}}}
\tikzset{connect ne > length/.style={connect ne >=true, keylengthne={#1}}}
\tikzset{connect sw > length/.style={connect sw >=true, keylengthsw={#1}}}
\tikzset{connect s > length/.style ={connect s >=true,  keylengths ={#1}}}
\tikzset{connect se > length/.style={connect se >=true, keylengthse={#1}}}
\newlength\morphismheight
\newlength\wedgewidth
\newlength\minimummorphismwidth
\newlength\stateheight
\newlength\minimumstatewidth
\newlength\connectheight
\tikzset{width/.initial=\minimummorphismwidth}
\tikzset{colour/.initial=white}
  \let\thickness=\pgfmathresult
\tikzset{diredge/.style={decoration={
  markings,
  mark=at position 0.525 with {\arrow{#1}}},postaction={decorate}}}
\tikzset{
    diredge/.default=>
}
\tikzset{diredgestart/.style={decoration={
  markings,
  mark=at position 4pt with {\arrow{#1}}},postaction={decorate}}}
\tikzset{
    diredgestart/.default=<
}
\tikzset{diredgeend/.style={decoration={
  markings,
  mark=at position 1 with {\arrow{#1}}},postaction={decorate}}}
\tikzset{
    diredgeend/.default=>
}
\tikzset{forward arrow style/.style={every to/.style, decoration={
    markings,
    mark=at position 0.5*\pgfdecoratedpathlength+2pt with \arrow{>[\strarr]}},
    postaction=decorate}}
\tikzset{reverse arrow style/.style={every to/.style, decoration={
    markings,
    mark=at position 0.5*\pgfdecoratedpathlength+2pt with \arrow{<[\strarr]}},
    postaction=decorate}}
\tikzset{morphism/.style={morphismshape, node on layer=foreground}}
\tikzset{dashedmorphism/.style={dashedmorphismshape, node on layer=foreground}}
\newcommand{\tinyhandle}[1][dot]{\ensuremath{\smash{\raisebox{-1pt}{\ensuremath{\hspace{-2pt}\begin{pic}[scale=0.33]
        \node (0) at (0,0) {};
        \node[dot, inner sep=1.0pt] (1) at (0,0.3) {};
        \node[dot, inner sep=1.0pt] (2) at (0,0.7) {};
        \node (3) at (0,1) {};
        \draw (0.center) to (1.center);
        \draw (2.center) to (3.center);
        \draw[in=180, out=180, looseness=2] (1.center) to (2.center);
        \draw[in=0, out=0, looseness=2] (1.center) to (2.center);
\end{pic}\hspace{-1pt}}}}}}
\newcommand{\tinycup}{\begin{pic}[scale=0.17]
    \draw (0,0) to[out=-90,in=-90,looseness=2] (1.5,0);
\end{pic}}
\tikzset{arrow/.pic={
    \path[pic actions, decoration={ markings,
      mark=at position 1 with {\arrow{>[\strarr]}}
    },
    postaction={decorate}] (0,1pt) -- (0,2pt);
},double arrow/.pic={
    \path[pic actions, decoration={ markings,
      mark=at position \pgfdecoratedpathlength-3pt with {\arrow{>[\strarr]}},
      mark=at position \pgfdecoratedpathlength with {\arrow{>[\strarr]}}
    },
    postaction={decorate}] (0,-6pt) -- (0,4pt);
},
  reverse arrow/.pic={
    \path[pic actions, decoration={ markings,
      mark=at position 1 with {\arrow{<[\strarr]}}
    },
    postaction={decorate}] (0,1pt) -- (0,2pt);
}
}
\tikzset{surface picture/.style={xscale={0.6}, yscale=0.8, line width=\thickness}}
\tikzset{three dimensional picture/.style={}}
\tikzset{morphismtwocell/.style={morphism, width=0.5cm, connect ne length=0cm, connect se length=0cm, connect nw length=0cm, connect sw length=0cm}}
\tikzset{nmorphismtwocell/.style={draw, minimum width=0.8cm, connect ne length=0cm, connect se length=0cm, connect nw length=0cm, connect sw length=0cm}}
\tikzset{dashback/.style={black!40}}
\tikzset{shade 1 local/.style={shade 1}}
\tikzset{shade 2 local/.style={shade 2}}
\begin{document}
\title{Categorical composable cryptography\thanks{This work was supported by the Air Force Office of Scientific Research under award number FA9550-20-1-0375, Canada’s NFRF and NSERC, an Ontario ERA, and the University of Ottawa’s Research Chairs program.}}
%
%
\author{Anne Broadbent
\orcidID{0000-0003-1911-0093} \and
Martti Karvonen(\Envelope)
\orcidID{0000-0002-8919-343X}}
\authorrunning{A. Broadbent and M. Karvonen}
%
\institute{Department of Mathematics and Statistics, University of Ottawa, Ottawa, Canada
\email{\{abroadbe,martti.karvonen\}@uottawa.ca}}
\maketitle 
\begin{abstract}
We formalize the simulation paradigm of cryptography in terms of category theory and show that protocols secure against abstract attacks form a symmetric monoidal category, thus giving an abstract model of composable security definitions in cryptography. Our model is able to incorporate computational security, set-up assumptions and various attack models such as colluding or independently acting subsets of adversaries in a modular, flexible fashion. We conclude by using string diagrams to rederive the security of the one-time pad and no-go results concerning the limits of bipartite and tripartite cryptography, ruling out e.g., composable commitments and broadcasting.

\keywords{Cryptography  \and composable security \and quantum cryptography \and category theory}
\end{abstract}
\section{Introduction}
Modern cryptographic protocols are complicated algorithmic entities, and their security analyses are often no simpler than the protocols themselves. Given this complexity, it would be highly desirable to be able to design protocols and reason about them compositionally, \ie by breaking them down into smaller constituent parts. In particular, one would hope that combining protocols proven secure results in a secure protocol without need for further security proofs. However, this is not the case for stand-alone security notions that are common in cryptography. To illustrate such failures of composability, let us consider the history of quantum key distribution (QKD), as recounted in~\cite{PR14arxiv}: QKD was originally proposed in the 80s~\cite{BB84}. The first security proofs against unbounded adversaries followed a decade later~\cite{May96,BBB+00,SP00,May01}. However, since composability was originally not a concern, it was later realized that the original security definitions did not provide a good enough level of security~\cite{KRBM07}---they didn't guarantee security if the keys were to be actually used, since even a partial leak of the key would compromise the rest. The story ends on a positive note, as eventually a new security criterion was proposed, together with stronger proofs~\cite{Ren05,BHL+05}.

In this work we initiate a categorical study of composable security definitions in cryptography. In the viewpoint developed here one thinks of cryptography as a resource theory: cryptographic functionalities (e.g.~secure communication channels) are viewed as resources and cryptographic protocols let one transform some starting resources to others. For instance, one can view the one-time-pad as a protocol that transforms an authenticated channel and a shared secret key into a secure channel. For a given protocol, one can then study whether it is secure against some (set of) attack model(s), and protocols secure against a fixed set of models can always be composed sequentially and in parallel.

This is in fact the viewpoint taken in constructive cryptography~\cite{Mau11}, which also develops the one-time-pad example above in more detail. However~\cite{Mau11} does not make a formal connection to resource theories as usually understood, whether as in quantum physics~\cite{horodecki:resource,chitambar:resource}, or more generally as defined in order theoretic~\cite{Fritz2015} or categorical~\cite{CFS16} terms. Instead, constructive cryptography is usually combined with abstract cryptography~\cite{MR11} which is formalized in terms of a novel algebraic theory of systems~\cite{MMP+18}.

Our work can be seen as a particular formalization of the ideas behind constructive cryptography, or alternatively as giving a categorical account of the real-world-ideal-world paradigm (also known as the simulation paradigm~\cite{GM84}), which underlies more concrete frameworks for composable security, such as universally composable cryptography~\cite{Can01} and others~\cite{PW00,BPW04,BPW07,MT13,HS15,LHM19,KTR20}. We will discuss these approaches and abstract and constructive cryptography in more detail in Section~\ref{sec:relatedwork}

Our long-term goal is to enable cryptographers to reason about composable security at the same level of formality as stand-alone security, \emph{without having to fix all the details of a machine model nor having to master category theory}. Indeed, our current results already let one define multipartite protocols and security against arbitrary subsets of malicious adversaries \emph{in any symmetric monoidal category $\CC$}. Thus, as long as one's model of interactive computation results in a symmetric monoidal category, or more informally, one is willing to use pictures such as \cref{fig:attackonprod} to depict connections between computational processes without further specifying the order in which the picture was drawn, one can use the simulation paradigm to reason about multipartite security against malicious participants composably---and specifying finer details of the computational model is only needed to the extent that it affects the validity of one's argument. Moreover, as our attack models and composition theorems are fairly general, we hope that more refined models of adversaries can be incorporated.

We now highlight our contributions to cryptography:
We show how to adapt resource theories as categorically formulated~\cite{CFS16} in order to reason abstractly about \emph{secure} transformations between resources. This is done in Section~\ref{sec:crypto} by formalizing the simulation paradigm in terms of an abstract attack model (Definition~\ref{def:attack}), designed to be general enough to capture standard attack models of interest (and more) while still structured enough to guarantee composability. This section culminates in Corollary~\ref{cor:simultaneoussafety}, which shows that for any fixed set of attack models, the class of protocols secure against each of them results in a symmetric monoidal category. In Theorem~\ref{thm:perfectlifting} we observe that under suitable conditions, images of secure protocols under monoidal functors remain secure, which gives an abstract variant of the lifting theorem~\cite[Theorem 15]{Unr10} that states that perfectly UC-secure protocols are quantum UC-secure.
We adapt this framework to model \emph{computational security} in two ways: either by replacing equations with an equivalence relation, abstracting the idea of computational indistinguishability, as is done in section~\ref{sec:extensions}, or by working with a notion of distance, deferred to a full version. In the case of a distance, one can then either explicitly bound the distance between desired and actually achieved behavior, or work with sequences of protocols that converge to the target in the limit: the former models working in the finite-key regimen~\cite{TLGR12} and the latter models the kinds of asymptotic security and complexity statements that are common in cryptography.
Finally, we apply the framework developed to study bipartite and tripartite cryptography. We first prove pictorially the security of the one-time pad. We then reprove the no-go-theorems of~\cite{PR08,MR11,MMP+18} concerning two-party commitments (resp. three-party broadcasting) in this setting, and reinterpret them as limits on what can be achieved securely in any compact closed category (resp. symmetric monoidal category). The key steps of the proof are done graphically, thus opening the door for cryptographers to use such pictorial representations as rigorous tools rather than merely as illustrations.

Moreover, we discuss some categorical constructions capturing aspects of resource theories appearing in the physics literature. These contributions may be of independent interest for further categorical studies on resource theories. 
 In~\cite{CFS16} it is observed that many resource theories arise from an inclusion $\cat{C}_F\hookrightarrow\cat{C}$ of free transformations into a larger monoidal category, by taking the resource theory of states. We observe that this amounts to applying the monoidal Grothendieck construction~\cite{moeller:monoidalgrothendieck} to the functor $\CF\to\CC\xrightarrow{\hom(I,-)}\cat{Set}$. This suggests applying this construction more generally to the composite of monoidal functors $F\colon\cat{D}\to\cat{C}$ and  $R\colon \cat{C}\to\Set$.
        In Example~\ref{ex:n-partite} we note that choosing $F$ to be the $n$-fold monoidal product $\CC^n\to\CC$ captures resources shared by $n$ parties and $n$-partite transformations between them.
        In the extended version, we model categorically situations where there is a notion of distance between resources, and instead of exact resource conversions one either studies approximate transformations or sequences of transformations that succeed in the limit.
        In the extended version, we discuss a variant of a construction on monoidal categories, used in special cases in~\cite{fongetal:backprop} and discussed in more detail in~\cite{cruttwell2021categorical,Gavranovic:compositional}, that allows one to declare some resources free and thus enlarge the set of possible resource conversions.
\subsection{Related work}\label{sec:relatedwork}
We have already mentioned that cryptographers have developed a plethora of frameworks for composable security, such as universally composable cryptography~\cite{Can01}, reactive simulatability~\cite{PW00,BPW04,BPW07} and others~\cite{MT13,HS15,LHM19,KTR20}. Moreover, some of these frameworks have been adapted to the quantum setting~\cite{BM04arxiv,Unr10,MR09}. One might hence be tempted to think that the problem of composability in cryptography has been solved. However, it is fair to say that most mainstream cryptography is not formulated composably and that composable cryptography has yet to realize its full potential. Moreover, this proliferation of frameworks should be taken as evidence of the continued importance of the issue, and is in fact reflected by the existence of a recent Dagstuhl seminar on this matter~\cite{CKL+19}. Indeed, the aforementioned frameworks mostly consist of setting up fairly detailed models of interacting machines, which as an approach suffers from two drawbacks:
    Firstly, in order to be more realistic, the detailed models are often complicated, both to reason in terms of and to define, thus making practicing cryptographers less willing to use them. Perhaps more importantly it is not always clear whether the results proven in a particular model apply more generally for other kinds of machines, whether those of a competing framework or those in the real world. It is true that the choice of a concrete machine model does affect what can be securely achieved---for instance, quantum cryptography differs from classical cryptography and similarly classical cryptography behaves differently in synchronous and asynchronous settings~\cite{BCG93,KMTZ13}. Nevertheless, one might hope that composable cryptography could be done at a similar level of formality as complexity theory, where one rarely worries about the number of tapes in a Turing machine or of other low-level details of machine models.
    Second, changing the model slightly (to \eg model different kinds of adversaries or to incorporate a different notion of efficiency) often requires reproving ``composition theorems'' of the framework or at least checking that the existing proof is not broken by the modification.

In contrast to frameworks based on detailed machine models, there are two closely related top-down approaches to cryptography: constructive cryptography~\cite{Mau11} and its cousin abstract cryptography~\cite{MR11}. We are indebted to both of these approaches, and indeed our framework could be seen as formalizing the key idea of constructive cryptography---namely, cryptography as a resource theory---and thus occupying a similar space as abstract cryptography. A key difference is that constructive cryptography is usually instantiated in terms of abstract cryptography~\cite{MR11}, which in turn is based on a novel algebraic theory of systems~\cite{MMP+18}. However, our work is not merely a translation from this theory to categorical language, as there are important differences and benefits that stem from formalizing cryptography in terms of a well-established and well-studied algebraic theory of systems---that of (symmetric) monoidal categories:

        The fact that cryptographers wish to compose their protocols \emph{sequentially and in parallel} strongly suggests using \emph{monoidal categories}, that have these composition operations as primitives. In our framework, protocols secure against a fixed set of attack models results in a symmetric monoidal category. In contrast, the algebraic theory of systems~\cite{MMP+18} on which abstract cryptography is based takes parallel composition and internal wiring as its primitives. This design choice results in some technical kinks and tangles that are natural with any novel theory but have already been smoothed out in the case of category theory. For instance, in the algebraic theory of systems of~\cite{MMP+18} the parallel composition is a partial operation and in particular the parallel composite of a system with itself is never defined\footnote{While the suggested fix is to assume that one has ``copies'' of the same system with disjoint wire labels, it is unclear how one recognizes or even defines \emph{in terms of the system algebra} that two distinct systems are copies of each other.} and the set of wires coming out of a system is fixed once and for all\footnote{Indeed, while~\cite{portmann:causal} manages to bundle and unbundle ports along isomorphism when convenient, it seems like the chosen technical foundation makes this more of a struggle than it should be.}. In contrast, in a monoidal category parallel composition is a total operation and whether one draws a box with $n$ output wires of types $A_1,\dots A_n$ or single output wire of type $\bigotimes_{i=1}^n A_i$ is a matter of convenience. Technical differences such as these make a direct formal comparison or translation between the frameworks difficult, even if informally and superficially there are similarities.

        We do not abstract away from an attacker model, but rather make it an explicit part of the formalism that can be modified without worrying about composability. This makes it possible to consider and combine very easily different security properties, and in particular paves the way to model attackers with limited powers such as honest-but-curious adversaries. In our framework, one can first fix a protocol transforming some resource to another one, and then discuss whether this transformation is secure against different attack models. In contrast, in abstract cryptography a cryptographic resource is a tuple of functionalities, one for each set of dishonest parties, and thus has no prior existence before fixing the attack model. This makes the question ``what attack models is this protocol secure against?'' difficult to formalize.

        As category theory is de facto the lingua franca between several subfields of mathematics and computer science, elucidating the categorical structures present in cryptography opens up the door to further connections between cryptography and other fields. For instance, game semantics readily gives models of interactive, asynchronous and probabilistic (or quantum) computation~\cite{winskel:game,clairambault:gamesforquantum,clairambaultetal:gamesforquantum2} in which our theory can be instantiated, and thus further paves the way for programming language theory to inform cryptographic models of concurrency.

        Category theory comes with existing theory, results and tools that can readily be applied to questions of cryptographic interest. In particular, the graphical calculi of symmetric monoidal and compact closed categories~\cite{Sel10} enables one to rederive impossibility results shown in~\cite{PR08,MR11,MMP+18} purely pictorially. In fact, such pictures were already often used as heuristic devices that illuminate the official proofs, and viewing these pictures categorically lets us promote them from mere illustrations to rigorous yet intuitive proofs. Indeed, in~\cite[Footnote 27]{MR11} the authors suggest moving from a 1-dimensional symbolic presentation to a 2-dimensional one, and this is exactly what the graphical calculus achieves.

The approaches above result in a framework where security is defined so as to guarantee composability. In contrast, approaches based on various protocol logics~\cite{DMP01,DMP03,DDMP03b,DDMP03a,DDMP05,DDMR07} aim to characterize situations where composition can be done securely, even if one does not use composable security definitions throughout. As these approaches are based on process calculi, they are categorical under the hood~\cite{pavlovic1997,Mifsudetal:controlstructures} even if not overtly so. There is also earlier work explicitly discussing category theory in the context of cryptography~\cite{breiner:graphicaldicrypto,coecke:graphicalqkd,sunwang:graphicalbc,breiner:selftesting,heunen:qkd,hillebrand:superdense,coecke:environment,kissinger2017picture,stay:crypto,dusko:crypto,hines:diagrammaticrypto,pavlovic2012tracing}, but they concern stand-alone security of particular cryptographic protocols, rather than categorical aspects of composable security definitions.
\section{Resource theories}
We briefly review the categorical viewpoint on resource theories of~\cite{CFS16}. Roughly speaking, a resource theory can be seen as a SMC but the change in terminology corresponds to a change in viewpoint: usually in category theory one studies global properties of a category, such as the existence of (co)limits, relationships to other categories, etc. In contrast, when one views a particular SMC $\CC$ as resource theory, one is interested in local questions. One thinks of objects of $\CC$ as resources, and morphisms as processes that transform a resource to another. From this point of view, one mostly wishes to understand whether $\hom_\CC(X,Y)$ is empty or not for resources $X$ and~$Y$ of interest. Thus from the resource-theoretic point of view, most of the interesting information in $\CC$ is already present in its preorder collapse. As concrete examples of resource-theoretic questions, one might wonder if
    (i) some noisy channels can simulate a (almost) noiseless channel~\cite[Example 3.13.]{CFS16},
    (ii) there is a protocol that uses only local quantum operations and classical communication and transforms a particular quantum state to another one~\cite{Chitambaretal:LOCC},
    (iii) some non-classical statistical behavior can be used to simulate other such behavior~\cite{abramskyetal:comonadicview}.
In~\cite{CFS16} the authors show how many familiar resource theories arise in a uniform fashion: starting from an SMC $\CC$ of processes equipped with a wide sub-SMC $\CF$, the morphisms of which correspond to ``free'' processes, they build several resource theories (=SMCs). Perhaps the most important of these constructions is the resource theory of states: given $\CF\hookrightarrow\CC$, the corresponding resource theory of states can be explicitly constructed by taking the objects of this resource theory to be states of $\CC$, \ie maps $r\colon I\to A$ for some $A$, and maps $r\to s$ are maps $f\colon A\to B$ in $\CF$ that transform $r$ to $s$ as in \cref{fig:state_transform}.

We now turn our attention towards cryptography. As contemporary cryptography is both broad and complex in scope, any faithful model of it is likely to be complicated as well. A benefit of the categorical idiom is that we can build up to more complicated models in stages, which is what we will do in the sequel. We phrase our constructions in terms of an arbitrary SMC $\CC$, but in order to model actual cryptographic protocols, the morphisms of $\CC$ should represent interactive computational machines with open ``ports'', with composition then amounting to connecting such machines together. Different choices of $\CC$ set the background for different kinds of cryptography, so that quantum cryptographers want $\CC$ to include quantum systems whereas in classical cryptography it is sufficient that these computational machines are probabilistic. Constructing such categories $\CC$ in detail is not trivial but is outside our scope---we will discuss this in more detail in section~\ref{sec:outlook}.

Our first observation is that there is no reason to restrict to inclusions $\CF\hookrightarrow\CC$ in order to construct a resource theory of states. Indeed, while it is straightforward to verify explicitly that the resource theory of states is a symmetric monoidal category, it is instructive to understand more abstractly why this is so: in effect, the constructed category is the category of elements of the composite functor $\CF\to\CC\xrightarrow{\hom(I,-)}\cat{Set}$. As this composite is a (lax) symmetric monoidal functor, the resulting category is automatically symmetric monoidal as observed in~\cite{moeller:monoidalgrothendieck}. Thus this construction goes through for any symmetric (lax) monoidal functors $\cat{D}\xrightarrow{F}\cat{C}\xrightarrow{R}\Set$. Here we may think of $F$ as interpreting free processes into an ambient category of all processes, and $R\colon\CC\to\cat{Set}$ as an operation that gives for each object $A$ of $\CC$ the set $R(A)$ of resources of type~$A$.

Explicitly, given symmetric monoidal functors $\cat{D}\xrightarrow{F}\cat{C}\xrightarrow{R}\Set$, the category of elements $\res RF$ has as its objects pairs $(r,A)$ where $A$ is an object of $\cat{D}$ and $r\in RF(A)$, the intuition being that $r$ is a resource of type $F(A)$. A morphism $(r,A)\to (s,B)$ is given by a morphism $f\colon A\to B$ in $\cat{D}$ that takes $r$ to $s$, \ie satisfies $RF(f)(r)=s$. The symmetric monoidal structure comes from the symmetric monoidal structures of $\cat{D}, \Set$ and $RF$. Somewhat more explicitly, $(r,A)\otimes (s,B)$ is defined by $(r\otimes s,A\otimes B)$ where $r\otimes s$ is the image of $(r,s)$ under the function $RF(A)\times RF(B)\to RF(A\otimes B)$ that is part of the monoidal structure on $RF$, and on morphisms of $\res RF$ the monoidal product is defined from that of $\cat{D}$.

From now on we will assume that $F$ is strong monoidal, and while $R=\hom(I,-)$ captures our main examples of interest, we will phrase our results for an arbitrary lax monoidal $R$. This relaxation allows us to capture the $n$-partite structure often used when studying cryptography, as shown next.
\begin{example}\label{ex:n-partite}
Consider the resource theory induced by $\CC^n\xrightarrow{\otimes}\CC\xrightarrow{\hom(I,-)}\Set$, where we write $\otimes$ for the $n$-fold monoidal product\footnote{As $\CC$ is symmetric, the functor $\otimes$ is strong monoidal.}. The resulting resource theory has a natural interpretation in terms of $n$ agents trying to transform resources to others: an object of this resource theory corresponds to a pair $((A_i)_{i=1}^n,r\colon I\to \bigotimes A_i)$, and can be thought of as an $n$-partite state, depicted in \cref{fig:n-partite_state}, where the $i$th agent has access to a port of type $A_i$. A morphism $\f=(f_1,\dots f_n)\colon ((A_i)_{i=1}^n,r)\to ((B_i)^n_{i=1},s)$ between such resources then amounts to a protocol that prescribes, for each agent $i$ a process $f_i$ that they should perform so that~$r$ gets transformed to $s$ as in \cref{fig:n-partite_state_transformation}.
\end{example}
\begin{figure}
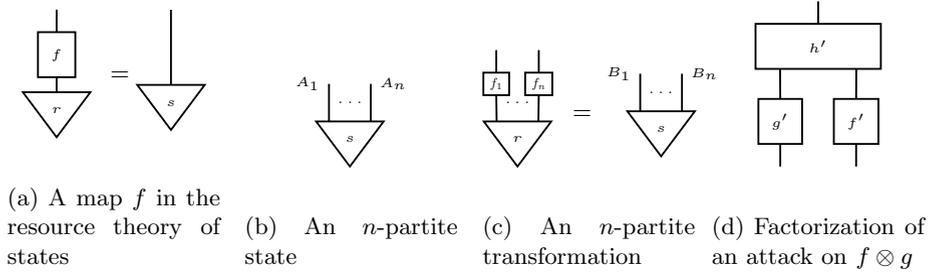

    \centering
\begin{subfigure}[b]{0.23\textwidth}
\[\begin{pic}
    \node[morphism] (f) at (0,.5) {$f$};
    \node[state] (x) at (0,0) {$r$};
    \draw (f.south) to node[right] {$$} (x.north);
    \draw (f.north) to +(0,.3) node[right] {};
  \end{pic}\enspace=\begin{pic}
  \node[state] (y) at (0,0) {$s$};
  \draw (y.north) to +(0,1);
\end{pic}\]
	\caption{A map $f$ in the resource theory of states}
	\label{fig:state_transform}
	\end{subfigure}
	~
    \begin{subfigure}[b]{0.23\textwidth}
\[
    \begin{pic}
    \node[state] (f) at (0,0) {$s$};
    \node (a) at (0,.25) {$\ldots$};
    \draw ([xshift=-1.5pt]f.A) to +(0,.5) node[left] {$A_1$};
    \draw ([xshift=1.5pt]f.B) to +(0,.5) node[right] {$A_n$};
  \end{pic}
  \]
        \caption{An $n$-partite state}
        \label{fig:n-partite_state}
    \end{subfigure}
    ~
    \begin{subfigure}[b]{0.23\textwidth}
\[
  \begin{pic}
    \node[state] (r) at (0,0) {$r$};
    \node (a) at (0,.25) {$\ldots$};
    \node[morphism,scale=.5,font=\normalsize] (f) at (-.28,.5) {$f_1$};
    \node[morphism,scale=.5,font=\normalsize] (g) at (.28,.5) {$f_n$};
    \draw ([xshift=-1.5pt]r.A) to (f.south);
    \draw ([xshift=1.5pt]r.B) to (g.south);
    \draw (f.north) to +(0,.3);
    \draw (g.north) to +(0,.3);
  \end{pic}\enspace=
    \begin{pic}
    \node[state] (f) at (0,0) {$s$};
    \node (a) at (0,.25) {$\ldots$};
    \draw ([xshift=-1.5pt]f.A) to +(0,.5) node[left] {$B_1$};
    \draw ([xshift=1.5pt]f.B) to +(0,.5) node[right] {$B_n$};
  \end{pic}
  \]
        \caption{An $n$-partite transformation}
        \label{fig:n-partite_state_transformation}
    \end{subfigure}
   ~\begin{subfigure}[b]{0.23\textwidth}
   \[\begin{pic}
	\node[morphism] (g) at (0,0) {$g'$};
	\node[morphism] (f) at (1,0) {$f'$};
	\setlength\minimummorphismwidth{13mm}
	\node[morphism] (h) at (0.5,1) {$h'$};
	\draw (g.north) to ([xshift=0.5pt]h.south west);
	\draw (f.north) to ([xshift=-0.5pt]h.south east);
	\draw (h.north) to +(0,.3);
	\draw (g.south) to +(0,.-.3);
	\draw (f.south) to +(0,.-.3);
\end{pic}
\]
        \caption{Factorization of an attack on $f\otimes g$}
        \label{fig:attackonprod}
   \end{subfigure}
\caption{Some resource transformations\vspace{-1em}}
\end{figure}
In this resource theory, all of the agents are equally powerful and can perform all processes allowed by $\CC$, and this might be unrealistic: first of all, $\CC$ might include computational processes that are too powerful/expensive for us to use in our cryptographic protocols. Moreover, having agents with different computational powers is important to model \eg blind quantum computing~\cite{BFK09} where a client with access only to limited, if any, quantum computation tries to securely delegate computations to a server with a powerful quantum computer. This limitation is easily remedied: we could take the $i$th agent to be able to implement computations in some sub-SMC $\CC_i$ of $\CC$, and then consider $\prod_{i=1}^n \CC_i\to\CC$.

A more serious limitation is that such transformations have no security guarantees---they only work if each agent performs~$f_i$ as prescribed by the protocol. We fix this next.

\section{Cryptography as a resource theory}\label{sec:crypto}
\begin{figure}
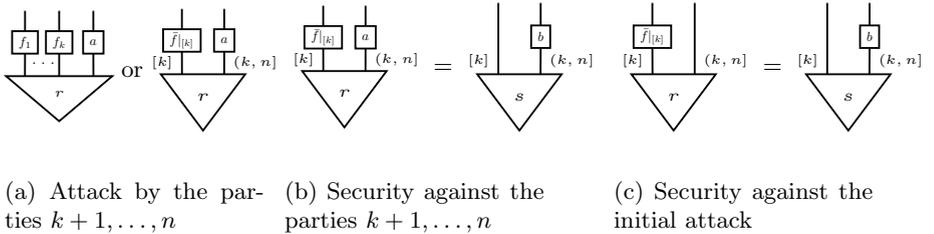
\vspace{-7.5mm}
    \centering
\begin{subfigure}[b]{0.28\textwidth}
\[\begin{pic}[scale=0.9]
  	\setlength\minimumstatewidth{14mm}
    \node[state] (r) at (0,0) {$r$};
    \node (a) at (-.23,.2) {$\ldots$};
    \node[morphism,scale=.5,font=\normalsize] (f) at (-.5,.5) {$f_1$};
    \node[morphism,scale=.5,font=\normalsize] (h) at (0,.5) {$f_{k}$};
    \node[morphism,scale=.5,font=\normalsize] (g) at (.5,.5) {$a$};
    \draw ([xshift=-3.2pt]r.A) to (f.south);
    \draw ([xshift=3.2pt]r.B) to (g.south);
    \draw (f.north) to +(0,.3);
    \draw (g.north) to +(0,.3);
     \draw (h.north) to +(0,.3);
    \draw (r.center) to (h.south);
  \end{pic}\text{ or}
\begin{pic}[scale=0.9]
    \node[morphism,scale=.5,font=\normalsize] (f) at (-.31,.5) {$\f|_{[k]}$};
     \node[morphism,scale=.5,font=\normalsize] (g) at (.31,.5) {$a$};
    \node[state,scale=1.25] (x) at (0,0) {$r$};
    \draw (f.south) to node[left] {$[k]$} (x.A);
    \draw (f.north) to +(0,.3) node[right] {};
    \draw (g.south) to node[right] {$(k,n]$} (x.B);
    \draw (g.north) to +(0,.3) node[right] {};
  \end{pic}\]
	\caption{Attack by the parties $k+1,\dots ,n$}
	\label{fig:attack}
	\end{subfigure}
	\enspace
    \begin{subfigure}[b]{0.28\textwidth}
\[\begin{pic}[scale=0.9]
    \node[morphism,scale=.5,font=\normalsize] (f) at (-.31,.5) {$\f|_{[k]}$};
     \node[morphism,scale=.5,font=\normalsize] (g) at (.31,.5) {$a$};
    \node[state,scale=1.25] (x) at (0,0) {$r$};
    \draw (f.south) to node[left] {$[k]$} (x.A);
    \draw (f.north) to +(0,.3) node[right] {};
    \draw (g.south) to node[right] {$(k,n]$} (x.B);
    \draw (g.north) to +(0,.3) node[right] {};
  \end{pic}=\begin{pic}
     \node[morphism,scale=.5,font=\normalsize] (g) at (.28,.5) {$b$};
    \node[state,scale=1.25] (x) at (0,0) {$s$};
    \draw (x.A) to node[left] {$[k]$} +(0,.35) to +(0,.95);
    \draw (g.south) to node[right] {$(k,n]$} (x.B);
    \draw (g.north) to +(0,.3) node[right] {};
  \end{pic}\]
        \caption{Security against the parties $k+1,\dots ,n$}
        \label{fig:security}
    \end{subfigure}
    \qquad\enspace
    \begin{subfigure}[b]{0.28\textwidth}
\[\begin{pic}
    \node[morphism,scale=.5,font=\normalsize] (f) at (-.28,.5) {$\f|_{[k]}$};
    \node[state,scale=1.25] (x) at (0,0) {$r$};
    \draw (f.south) to node[left] {$[k]$} (x.A);
    \draw (f.north) to +(0,.3) node[right] {};
    \draw (x.B) to node[right]  {$(k,n]$} +(0,.35)  to +(0,.95);
  \end{pic}=\begin{pic}
     \node[morphism,scale=.5,font=\normalsize] (g) at (.28,.5) {$b$};
    \node[state,scale=1.25] (x) at (0,0) {$s$};
    \draw (x.A) to node[left] {$[k]$} +(0,.35) to +(0,.95);
    \draw (g.south) to node[right] {$(k,n]$} (x.B);
    \draw (g.north) to +(0,.3) node[right] {};
  \end{pic}\]
        \caption{Security against the initial attack}
        \label{fig:initial_attack}
    \end{subfigure}
\caption{Attacks and security constraints\vspace{-1em}}
\end{figure}
In order for a protocol $\f=(f_1,\dots ,f_n)\colon ((A_i)_{i=1}^n,r)\to ((B_i)^n_{i=1},s)$ to be secure, we should have some guarantees about what happens if, as a result of \emph{an attack} on the protocol, something else than $(f_1,\dots ,f_n)$ happens. For instance, some subset of the parties might deviate from the protocol and do something else instead. In the simulation paradigm~\cite{GM84}, security is then defined by saying that, anything that could happen when running the real protocol, \ie $\f$ with $r$, could also happen in the ideal world, \ie with $s$. A given protocol might be secure against some kinds of attacks and insecure against others, so we define security against an abstract attack model. This abstract notion of an attack model is one of the main definitions of our paper. It isolates conditions needed for the composition theorem (Theorem ~\ref{thm:composition}). It also captures our key examples that we use to illustrate the definition after giving it. Note that most proofs are deferred to an extended version.
\begin{definition}\label{def:attack} An \emph{attack model} $\A$ on an SMC $\CC$ consists of giving for each morphism $f$ of $\CC$ a class $\A(f)$ of morphisms of $\CC$ such that
	\begin{enumerate}[(i)]
		\item $f\in\A(f)$ for every $f$.
		\item For any $f\colon A\to B$ and $g\colon B\to C$ and composable $g'\in \A(g),f'\in \A(f)$ we have $g'\circ f'\in \A(g\circ f)$.
		 Moreover, any $h\in \A(g\circ f)$ factorizes as $g'\circ f'$ with $g'\in \A(g)$ and $f'\in \A(f)$.
		\item For any $f\colon A\to B$, $g\colon C\to D$ in $\cat{C}$ and $f'\in \A(f), g'\in \A(g)$ we have $f'\otimes g'\in \A(f\otimes g)$. Moreover, any $h\in \A(f\otimes g)$ factorizes as $h'\circ (f'\otimes g')$ with $f'\in \A(f)$, $g'\in \A(g)$ and $h'\in \A(\id[B\otimes D])$.
	\end{enumerate}
 Let $f\colon (A,r)\to (B,s)$ define a morphism in the resource theory $\res RF$ induced by $F\colon\cat{D}\to\cat{C}$ and $R\colon \cat{C}\to\cat{Set}$. We say that $f$ is \emph{secure} against an attack model $\A$ on $\cat{C}$ (or $\A$-secure) if for any $f'\in \A(F(f))$ with $\dom(f')=F(A)$ there is $b\in\A(\id[F(B)])$ with $\dom(b)=F(B)$ such that $R(f')r=R(b)s$.
\end{definition}
The above definition of security asks for perfect equality and corresponds to information-theoretic security in cryptography. This is often too much to hope for, and we will replace this by an equivalence relation in section~\ref{sec:extensions} and by a notion of distance in an extended version.

The intuition is that $\A$ gives, for each process in $\CC$, the set of behaviors that the attackers could force to happen instead of honest behavior. In particular, $\A(\id[B])$ give the set of behaviors that is available to attackers given access to a system of type $B$. Then property (i) amounts to the assumption that the adversaries could behave honestly. The first halves of properties (ii) and (iii) say that, given an attack on $g$ and one on $f$, both attacks could happen when composing $g$ and~$f$ sequentially or in parallel. The second parts of these say that attacks on composite processes can be understood as composites of attacks. However, note that (iii) does not say that an attack on a product has to be a product of attacks: the factorization says that any $h\in\A(g\otimes f)$ factorizes as in \cref{fig:attackonprod} with $g'\in \A(g)$, $f'\in \A(f)$ and $h'\in \A(\id[B\otimes D])$. The intuition is that an attacker does not have to attack two parallel protocols independently of each other, but might play the protocols against each other in complicated ways. This intuition also explains why we do not require that all morphisms in $\A(f)$ have $F(A)$ as their domain, despite the definition of $\A$-security quantifying only against those: when factoring $h\in \A(g\circ f)$ as $g'\circ f'$ with $g'\in \A(g)$ and $f'\in \A(f)$, we can no longer guarantee that $F(B)$ is the domain of $g'$---perhaps the attackers take us elsewhere when they perform~$f'$.

If one thinks of $F\colon\cat{D}\to\cat{C}$ as representing the inclusion of free processes into general processes, one also gets an explanation why we do not insist that free processes and attacks live in the same category, \ie that $F=\id[\CC]$. This is simply because we might wish to prove that some protocols are secure against attackers that can use more resources than we wish or can use in the protocols.
\begin{example} For any SMC $\CC$ there are two trivial attack models: the minimal one defined by $\A(f)=\{f\}$ and the maximal one sending $f$ to the class of all morphisms of $\CC$. We interpret the minimal attack model as representing honest behavior, and the maximal one as representing arbitrary malicious behavior.
\end{example}
\begin{proposition}\label{prop:productattackmodel} If $\A_1,\dots ,\A_n$ are attack models on SMCs $\CC_1,\dots ,\CC_n$ respectively, then there is a product $\prod_{i=1}^n\A_i$ attack model on $\prod_{i=1}^n\CC_i$ defined by $(\prod_{i=1}^n \A_i)(f_1,\dots, f_n)=\prod_{i=1}^n \A_i(f_i)$.
\end{proposition}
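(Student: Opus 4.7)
The plan is to verify the three axioms of Definition~\ref{def:attack} componentwise, exploiting the fact that in the product SMC $\prod_{i=1}^n \C_i$ both composition and the monoidal product are defined coordinatewise, so the axioms decouple into $n$ independent instances.

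For axiom (i), I would simply observe that since $f_i\in\A_i(f_i)$ for each $i$, the tuple $(f_1,\dots,f_n)$ lies in $\prod_{i=1}^n\A_i(f_i)=(\prod_{i=1}^n\A_i)(f_1,\dots,f_n)$. For the closure part of axiom (ii), given composable tuples $(g'_i)\in\prod\A_i(g_i)$ and $(f'_i)\in\prod\A_i(f_i)$, their composite is $(g'_i\circ f'_i)$, and each entry lies in $\A_i(g_i\circ f_i)$ by the corresponding clause of axiom (ii) for $\A_i$. For the factorization part, any $(h_1,\dots,h_n)\in\prod\A_i(g_i\circ f_i)$ can be factored componentwise: apply axiom (ii) for each $\A_i$ to obtain $h_i=g'_i\circ f'_i$ with $g'_i\in\A_i(g_i)$ and $f'_i\in\A_i(f_i)$, and then collect these into the tuples $(g'_i)$ and $(f'_i)$, which furnish the required factorization in the product.

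Axiom (iii) is handled in the same spirit. Closure under the monoidal product follows because $(\f\otimes\g)_i = f_i\otimes g_i$ in the product SMC, and applying the closure half of axiom (iii) for each $\A_i$ yields $f'_i\otimes g'_i\in\A_i(f_i\otimes g_i)$. For the factorization, any $(h_i)\in\prod\A_i(f_i\otimes g_i)$ factorizes componentwise as $h_i = h'_i\circ(f'_i\otimes g'_i)$ with $f'_i\in\A_i(f_i)$, $g'_i\in\A_i(g_i)$, and $h'_i\in\A_i(\id[B_i\otimes D_i])$; reassembling these tuples gives a factorization $(h_i) = (h'_i)\circ\bigl((f'_i)\otimes(g'_i)\bigr)$ in the product, where $(h'_i)\in(\prod\A_i)(\id[(B_1,\dots,B_n)\otimes(D_1,\dots,D_n)])$ since identities in the product are tuples of identities.

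There is no real obstacle to overcome: the only thing to check carefully is that the product SMC structure is genuinely pointwise (identities, composition, and tensor), so that each of the three axioms for the product attack model reduces cleanly to the conjunction of the same axiom for the $\A_i$. Thus the proof will consist of three short paragraphs, one per axiom, each just invoking the corresponding axiom for the individual $\A_i$ after noting the relevant coordinatewise description.
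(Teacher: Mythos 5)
Your componentwise verification is correct and is exactly the argument the paper has in mind: the proposition is stated without proof because composition, identities, and the tensor in $\prod_{i=1}^n\C_i$ are all coordinatewise, so each axiom of Definition~\ref{def:attack} reduces to the conjunction of the corresponding axioms for the $\A_i$, just as you write. No gaps.
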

This proposition, together with the minimal and maximal attack models, is already expressive enough to model multi-party computation where some subset of the parties might do arbitrary malicious behavior. Indeed, consider the $n$-partite resource theory induced by $\CC^n\xrightarrow{\otimes}\CC\xrightarrow{\hom(I,-)}\Set$. Let us first model a situation where the first $n-1$ participants are honest and the last participant is dishonest. In this case we can set $\A=\prod_{i=1}^n \A_i$ where each of $\A_1,\dots ,\A_{n-1}$ is the minimal attack model on $\CC$ and $\A_n$ is the maximal attack model. Then, an attack on $\f=(f_1,\dots f_n)\colon ((A_i)_{i=1}^n,r)\to ((B_i)^n_{i=1},s)$ can be represented by the first $n-1$ parties obeying the protocol and the $n$-th party doing an arbitrary computation~$a$, as depicted in the two pictures of \cref{fig:attack},
where $[n]:=\{1,\dots,n\}$, $(k,n]:=\{k+1,\dots n\}$, $\f|_{[k]}:=\bigotimes_{i=1}^k f_i$, and here $k=n-1$. The latter representation will be used when we do not need to emphasize pictorially the fact that the honest parties are each performing their own individual computations.

If instead of just one attacker, there are several \emph{independently} acting adversaries, we can take $\A=\prod_{i=1}^n \A_i$ where $\A_i$ is the minimal or maximal attack structure depending on whether the $i$th participant is honest or not. If the set of dishonest parties can collude and communicate arbitrarily during the process, we need the flexibility given in Definition~\ref{def:attack} and have the attack structure live in a different category than where our protocols live. For simplicity of notation, assume that the first~$k$ agents are honest but the remaining parties are malicious and might do arbitrary (joint) processes in $\CC$. In particular, the action done by the dishonest parties $k+1,\dots , n$ need not be describable as a product $\bigotimes_{i=k+1}^n (a_i)$ of individual actions. In that case we define $\A$ as follows: we first consider our resource theory as arising from $\CC^n\xrightarrow{\id^k\times\otimes}\CC^k\times \CC\xrightarrow{\otimes}\CC\xrightarrow{\hom(I,-)}\Set$, and define $\A$ on $\CC^k\times \CC$ as the product of the minimal attack model on $\CC^k$ and the maximal one on $\CC$. Concretely, this means that the first $k$ agents always obey the protocol, but the remaining agents can choose to perform arbitrary joint behaviors in $\CC$. Then a generic attack on a protocol $\f$ can be represented exactly as before in \cref{fig:attack}, except we no longer insist that $k=n-1$. Now a protocol $\f$ is $\A$-secure if for any $a$ with $\dom(a)=(A_i)_{i=k+1}^n$ there is a $b$ with $\dom(b)=(B_i)_{i=k+1}^n$ satisfying the equation of \cref{fig:security}.

If one is willing to draw more wire crossings, one can easily depict and define security against an arbitrary subset of the parties behaving maliciously, and henceforward this is the attack model we have in mind when we say that some $n$-partite protocol is secure against some subset of the parties. Moreover, for any subset $J$ of dishonest agents, one could consider more limited kinds of attacks: for instance, the agents might have limited computational power or limited abilities to perform joint computations---as long as the attack model satisfies the conditions of Definition~\ref{def:attack} one automatically gets a composable notion of secure protocols by Theorem~\ref{thm:composition} below.
\begin{theorem}\label{thm:composition} Given symmetric monoidal functors $F\colon\cat{D}\to\cat{C}$, $R\colon \cat{C}\to\Set$ with $F$ strong monoidal and $R$ lax monoidal, and an attack model $\A$ on $\cat{C}$, the class of $\A$-secure maps forms a wide sub-SMC of the resource theory $\res RF$ induced by $RF$.
\end{theorem}
\noindent So far we have discussed security only against a single, fixed subset of dishonest parties, while in multi-party computation it is common to consider security against any subset containing \eg at most $n/3$ or $n/2$ of the parties. However, as monoidal subcategories are closed under intersection, we immediately obtain composability against multiple attack models.
\begin{corollary}\label{cor:simultaneoussafety}
Given a non-empty family of functors $(\cat{D}\xrightarrow{F_i}\cat{C_i}\xrightarrow{R_i}\Set)_{i\in I}$ with $R_iF_i=R_jF_j=:R$ for all $i,j\in I$ and attack models $\A_i$ on $\CC_i$ for each $i$, the class of maps in $\res R$ that is secure against each $\A_i$ is a sub-SMC of $\res R$.
\end{corollary}
Using Corollary~\ref{cor:simultaneoussafety} one readily obtains composability of protocols that are simultaneously secure against different attack models $\A_i$. Thus one could, in principle, consider composable cryptography in an $n$-party setting where some subsets are honest-but-curious, some might be outright malicious but have limited computational power, and some subsets might be outright malicious but not willing or able to coordinate with each other, without reproving any composition theorems.

While the security definition of $f$ quantifies over $\A(f)$, which may be infinite, under suitable conditions it is sufficient to check security only on a subset of $\A(f)$, so that whether $f$ is $\A$-secure often reduces to finitely many equations.
\begin{definition}\label{def:initialattack}
Given $f\colon A\to B$, a subset $X$ of $\A(f)$ is said to be \emph{initial} if any $f'\in\A(f)$ with $\dom(f')=A$ can be factorized as $b\circ a$ with $a\in X$ and $b\in\A(\id[B])$.
\end{definition}
\begin{theorem}\label{thm:initialattacks}
 Let $f\colon (A,r)\to (B,s)$ define a morphism in the resource theory induced by $F\colon\cat{D}\to\cat{C}$ and $R\colon \cat{C}\to\cat{Set}$ and let $\A$ be an attack model on $\CC$. If $X\subset\A(F(f))$ is initial, then $f$ is $\A$-secure if, and only if the security condition holds against attacks in $X$, \ie if for any $f'\in X$ with $\dom(f')=F(A)$ there is $b\in\A(\id[F(B)])$ such that $R(f')r=R(b)s$.
\end{theorem}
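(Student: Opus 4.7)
The plan is to prove both implications of the biconditional, with the forward direction being essentially immediate and the backward direction resting on a short functorial computation after invoking initiality.

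For the forward direction, assume $f$ is $\A$-secure. Since $X\subseteq\A(F(f))$, any $f'\in X$ with $\dom(f')=F(A)$ is in particular an element of $\A(F(f))$ with this domain, so applying $\A$-security yields the desired $b\in\A(\id[F(B)])$ with $R(f')r=R(b)s$. This direction requires no further work.

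For the backward direction, assume the security condition holds for all attacks in $X$, and let $f'\in\A(F(f))$ with $\dom(f')=F(A)$ be arbitrary. By initiality of $X$, I would factor $f'=b'\circ a$ with $a\in X$ and $b'\in\A(\id[F(B)])$. A quick bookkeeping check shows that $\dom(a)=\dom(b'\circ a)=\dom(f')=F(A)$, so the hypothesis applies to $a$: there exists $b\in\A(\id[F(B)])$ such that $R(a)r=R(b)s$. For the equality $R(a)r=R(b)s$ to parse, one has $\cod(a)=\cod(b)=:C$, and since $b'\colon C\to\cod(f')$, the composite $b'\circ b\colon F(B)\to\cod(f')$ is well-typed.

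Setting $b'':=b'\circ b$, the composition axiom (ii) for attack models, applied to $\id[F(B)]=\id[F(B)]\circ\id[F(B)]$, ensures $b''\in\A(\id[F(B)])$. Functoriality of $R$ then gives
\[
R(f')\,r \;=\; R(b')R(a)\,r \;=\; R(b')R(b)\,s \;=\; R(b'\circ b)\,s \;=\; R(b'')\,s,
\]
which is exactly the witness required for $\A$-security of $f$ against $f'$. The only potential subtlety is tracking that the domains and codomains of $a$, $b$, $b'$ align so that $b'\circ b$ makes sense and lies in $\A(\id[F(B)])$; this is handled by reading off the types from the factorization and from the hypothesis on $X$, and constitutes the main (though minor) obstacle of the argument.
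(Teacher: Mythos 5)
Your proof is correct and is essentially the intended argument (the paper treats this theorem as immediate and gives no separate proof): the forward direction is trivial from $X\subseteq\A(F(f))$, and the backward direction factors an arbitrary attack via initiality, applies the hypothesis to the initial piece, and composes the two simulators using axiom (ii) applied to $\id[F(B)]=\id[F(B)]\circ\id[F(B)]$. Your bookkeeping of domains and codomains, including the observation that $b'\circ b$ is well-typed and lies in $\A(\id[F(B)])$, is exactly the care the statement requires.
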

Let us return to the example of $\CC^n\to\CC$ with the first $k$ agents being honest and the final $n-k$ dishonest and collaborating. Then we can take a singleton as our initial subset of attacks on $\f$, and this is given by $\f|_{[k]}\otimes (\bigotimes_{i=k+1}^n\id)$. Intuitively, this represents a situation where the dishonest parties $k+1,\dots ,n$ merely stand by and forward messages between the environment and the functionality, so that initiality can be seen as explaining ``completeness of the dummy adversary''~\cite[Claim 11]{Can01} in UC-security. In this case the security condition can be equivalently phrased by saying that there exists $b\in\A([\id[b]])$ satisfying the equation of \cref{fig:initial_attack},
which reproduces the pictures of~\cite{MT13}. Similarly, for classical honest-but-curious adversaries one usually only considers the initial such adversary, who follows the protocol otherwise except that they keep track of the protocol transcript.
\begin{theorem}\label{thm:perfectlifting}
In the resource theory of $n$-partite states, if $(f_1,\dots f_n)$ is secure against some subset $J$ of $[n]$ and $F$ is a strong monoidal, then $(Ff_1,\dots, Ff_n)$ is secure against~$J$ as well.
\end{theorem}
\noindent For instance, if the inclusion of classical interactive computations into quantum ones is strong monoidal, \ie respects sequential and parallel composition (up to isomorphism), then unconditionally secure classical protocols are also secure in the quantum setting, as shown in the context of UC-security in~\cite[Theorem 15]{Unr10}. More generally, this result implies that the construction of the category of $n$-partite transformations secure against any fixed subset of $[n]$ is functorial in $\cat{C}$, and this is in fact also true for any family of subsets of $[n]$ by Corollary~\ref{cor:simultaneoussafety}.
\section{Computational security}\label{sec:extensions}
The discussion above has been focused on perfect security, so that the equations defining security hold exactly. This is often too high a standard for security to hope for, and consequently cryptographers routinely work with computational or approximate security. We model this in two ways. The first approach replaces equations with an equivalence relation abstracting from the idea that the end results are ``computationally indistinguishable'' rather than strictly equal. The latter approach amounts to working in terms of a (pseudo)metric quantifying how close we are to the ideal resource and is needed to model statements in finite-key cryptography~\cite{TLGR12}. The typical metric is given by ``distinguisher advantage for polynomial-time environments'', enabling one to use computational complexity theory. In a nutshell, this amounts to working with sequences of protocols and defining security by saying ``for any $\epsilon>0$, for sufficiently large $n$, for any attack on the $n$th protocol there is an attack on the target resource such that the end results are within $\epsilon$''. The first approach is mathematically straightforward and we discuss it next, while the second approach is relegated to an extended version.

Replacing strict equations with equivalence relations is easy to describe on an abstract level as an instance of the theory so far: one just assumes that $\CC$ has a monoidal congruence $\approx$ and then works with the resource theory induced by $\CC^n\to\CC/{\approx}\xrightarrow{\hom(I,-)}\Set$ with similar attack models as above. More explicitly, as long as each hom-set of $\CC$ is equipped with an equivalence relation $\approx$ that respects $\otimes$ and $\circ$ in that $f\approx f'$ and $g\approx g'$ imply $gf\approx g'f'$ (whenever defined) and $g\otimes f\approx g'\otimes f'$, then working with $\CC^n\to\CC/{\approx}\xrightarrow{\hom(I,-)}\Set$ results in security conditions that replace $=$ in $\CC$ with $\approx$ throughout. If $\CC$ describes (interactive) computational processes and $\approx$ represents computational indistinguishability (inability for any ``efficient'' process to distinguish between the two), one might need to replace $\CC$ (and consequently functionalities, protocols and attacks on them) with the subcategory of $\CC$ of efficient processes so that $\approx$ indeed results in a congruence.
\section{Applications}\label{sec:applications}
We will now explore how the one-time pad (OTP) fits into our framework, paralleling the discussion of OTP in~\cite{Mau11}. We will start from the category $\cat{FinStoch}$ of finite sets and stochastic maps between them, with $\otimes$ given by cartesian product of sets. This is sufficient for OTP, even if more complicated and interactive cryptographic protocols will need a different starting category. However, the actual category $\CC$ we work in is built from $\cat{FinStoch}$, essentially by a tripartite variant of the ``resource theory of universally-combinable processes'' of~\cite[Section 3.4]{CFS16}. We will defer the detailed construction of $\CC$ to an extended version and work in it more heuristically, allowing us to focus on the OTP.

Roughly speaking, a ``basic object'' of $\CC$ consists of finite sets $A_i$,$B_i,E_i$ for $i=1,2$, and of a map $f\colon A_1\otimes B_1\otimes E_1\to A_2\otimes B_2\otimes E_2$ in $\cat{FinStoch}$, depicted in \cref{fig:sharedbox}.
\begin{figure}
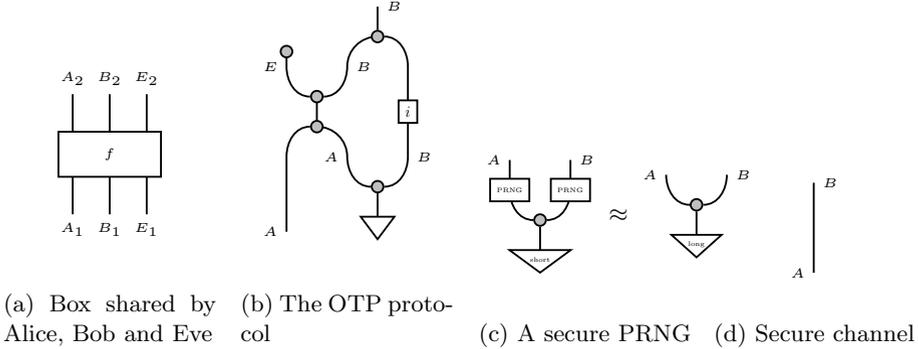
\vspace{-7.5mm}
    \centering
\begin{subfigure}[b]{0.23\textwidth}
\[
  \begin{pic}
    \setlength\minimummorphismwidth{10mm}
    \node[morphism] (f) at (0,0) {$f$};
    \draw ([xshift=-2.5pt]f.south west) to +(0,-.5) node[below] {$A_1$};
    \draw ([xshift=-2.5pt]f.north west) to +(0,.5) node[above] {$A_2$};
    \draw (f.south) to +(0,-.5) node[below] {$B_1$};
    \draw (f.north) to +(0,.5) node[above] {$B_2$};
    \draw ([xshift=2.5pt]f.south east) to +(0,-.5) node[below] {$E_1$};
    \draw ([xshift=2.5pt]f.north east) to +(0,.5) node[above] {$E_2$};
  \end{pic}\]
  \caption{Box shared by Alice, Bob and Eve}
  \label{fig:sharedbox}
  \end{subfigure}
  ~
    \begin{subfigure}[b]{0.23\textwidth}
\[
  \begin{pic}[scale=.4]
    \node[dot] (d) {};
    \draw (d) to +(0,-1) node[state,scale=0.5] {};
    \draw (d) to[out=0,in=-90] ++(1,1) node[right] {$B$} to ++(0,1.5) node[morphism,scale=.5] (i) {\large $i$};
    \draw (d) to[out=180,in=-90] ++(-1,1) node[left] {$A$} to[out=90,in=0] ++(-1,1) node[dot] (e) {};
    \draw (e) to[out=180,in=90] ++(-1,-1) to ++(0,-2.5) node[left] {$A$};
    \draw (e) to ++(0,1) node[dot] (f) {};
    \draw (f) to[out=180,in=-90] ++(-1,1) node[left] {$E$} to ++(0,.5) node[dot] {};
    \draw (f) to[out=0,in=-90] ++(1,1) node[right] {$B$} to[out=90,in=180] ++(1,1) node[dot] (g) {};
    \draw (g) to +(0,1) node[right] {$B$};
    \draw (g) to[out=0,in=90] ++(1,-1) to (i);
  \end{pic}
  \]
        \caption{The OTP protocol}
        \label{fig:OTP}
    \end{subfigure}
    ~
    \begin{subfigure}[b]{0.23\textwidth}
\[  \begin{pic}[scale=.4,yscale=-1]
    \node[dot] (d) {};
    \draw (d) to +(0,1) node[state,scale=0.5] {short};
    \draw (d) to[out=0,in=90] ++(1,-1) node[morphism,scale=0.5] {PRNG} to ++(0,-1) node[right] {$B$};
    \draw (d) to[out=180,in=90] ++(-1,-1)  node[morphism,scale=0.5] {PRNG} to ++(0,-1) node[left] {$A$};
  \end{pic}\approx
  \begin{pic}[scale=.4,yscale=-1]
    \node[dot] (d) {};
    \draw (d) to +(0,1) node[state,scale=0.5] {long};
    \draw (d) to[out=0,in=90] ++(1,-1) node[right] {$B$};
    \draw (d) to[out=180,in=90] ++(-1,-1) node[left] {$A$};
  \end{pic}
  \]
        \caption{A secure PRNG}
        \label{fig:PRNG}
    \end{subfigure}
   ~\begin{subfigure}[b]{0.23\textwidth}
\[
\begin{pic}[scale=.4] \draw (0,0) node[left] {$A$} to (0,3) node[right] {$B$}; \end{pic}  \]
        \caption{Secure channel}
        \label{fig:secure_channel}
   \end{subfigure}
\caption{Some resources and protocols\vspace{-1em}}
\end{figure}
The intuition is that $\tuple{(A_i,B_i,E_i)_{i\in\{1,2\}},f}$ represents a box shared by Alice, Bob and Eve, with Alice's inputs and outputs ranging over $A_1$ and $A_2$ respectively, and similarly for Bob and Eve. We will often label the ports just by the party who controls it, and omit labeling trivial ports. For example, if \cref{fig:copy_map}
\begin{figure}
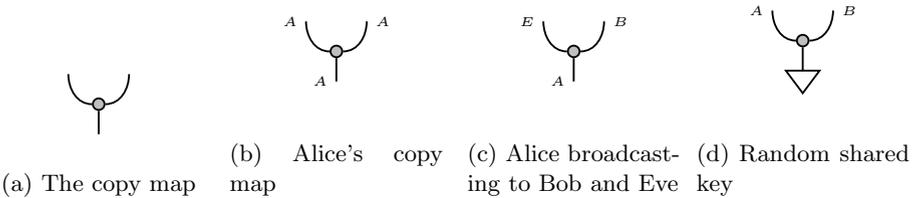
\vspace{-7.5mm}
    \centering
\begin{subfigure}[b]{0.23\textwidth}
\[
  \begin{pic}[scale=.4,yscale=-1]
    \node[dot] (d) {};
    \draw (d) to +(0,1);
    \draw (d) to[out=0,in=90] ++(1,-1);
    \draw (d) to[out=180,in=90] +(-1,-1);
  \end{pic}
  \]
  \caption{The copy map}
  \label{fig:copy_map}
  \end{subfigure}
  ~
    \begin{subfigure}[b]{0.23\textwidth}
\[
  \begin{pic}[scale=.4,yscale=-1]
    \node[dot] (d) {};
    \draw (d) to +(0,1) node [left] {$A$};
    \draw (d) to[out=0,in=90] ++(1,-1) node[right] {$A$};
    \draw (d) to[out=180,in=90] +(-1,-1) node[left] {$A$};
  \end{pic}
  \]
        \caption{Alice's copy map}
        \label{fig:private_copy}
    \end{subfigure}
    ~
    \begin{subfigure}[b]{0.23\textwidth}
\[
  \begin{pic}[scale=.4,yscale=-1]
    \node[dot] (d) {};
    \draw (d) to +(0,1) node [left] {$A$};
    \draw (d) to[out=0,in=90] ++(1,-1) node[right] {$B$};
    \draw (d) to[out=180,in=90] ++(-1,-1) node[left] {$E$};
  \end{pic}
  \]
        \caption{Alice broadcasting to Bob and Eve}
        \label{fig:insecure_channel}
    \end{subfigure}
   ~\begin{subfigure}[b]{0.23\textwidth}
\[
  \begin{pic}[scale=.4,yscale=-1]
    \node[dot] (d) {};
    \draw (d) to +(0,1) node[state,scale=0.5] {};
    \draw (d) to[out=0,in=90] +(1,-1)  node[right] {$B$};
    \draw (d) to[out=180,in=90] +(-1,-1) node[left] {$A$};
  \end{pic}
  \]
        \caption{Random shared key}
        \label{fig:shared_key}
   \end{subfigure}
   \caption{Variants of the copy map\vspace{-1em}}
\end{figure}
depicts the copy map $X\to X\otimes X$ for some set $X$ in $\cat{FinStoch}$, then \cref{fig:private_copy} denotes an object of $\CC$ representing Alice copying data privately, whereas \cref{fig:insecure_channel} denotes an object $\CC$ that sends Alice's input unchanged to Bob and to Eve---which we view as an insecure (but authenticated) channel from Alice to Bob.

A general object of $\CC$ then consists of a list of such basic objects, representing a list of such resources shared between Alice, Bob and Eve. A morphism of $\CC$ is roughly speaking a way of using the starting resources and local computation by the three parties to produce the target resources: a more formal description will be given in an extended version. In our attack model Alice and Bob are honest but Eve is dishonest, so she might do arbitrary local computation instead of whatever our protocols might prescribe.

In the version of the OTP we discuss, our starting resources consist of an insecure but authenticated channel\footnote{If the insecure channel allows Eve to tamper with the message, the analysis changes.} from Alice to Bob as in \cref{fig:insecure_channel} and (\ie $\otimes$) of a random key over the same message space, shared by Alice and Bob (\cref{fig:shared_key}). The goal is to build a secure channel from Alice to Bob (\cref{fig:secure_channel}) from these.

The local ingredients of OTP and the axioms they obey are depicted in \cref{fig:hopf} and correspond to a Hopf algebra with an integral in a SMC. Any finite group gives rise to such a structure in $\cat{FinStoch}$, with the integral given by the uniform distribution. Concretely, this means that Alice and Bob must agree on a group structure on the message space, and the fact that this multiplication forms a group and that the key is random can be captured by the equations of \cref{fig:hopf}.
\begin{figure}
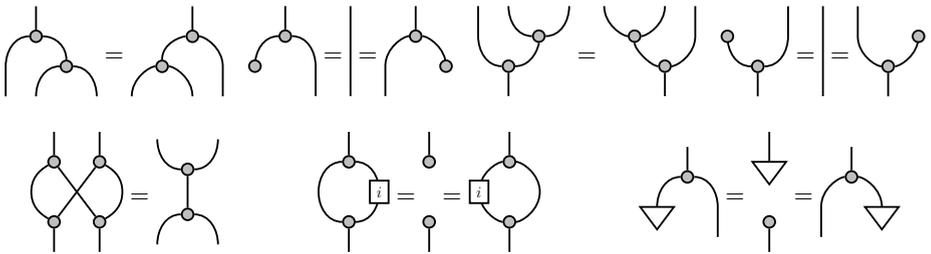
\vspace{-7.5mm}
    \centering
      \[
    \begin{pic}[scale=.4]
      \node[dot] (t) at (0,1) {};
      \node[dot] (b) at (1,0) {};
      \draw (t) to +(0,1);
      \draw (t) to[out=0,in=90] (b);
      \draw (t) to[out=180,in=90] (-1,0) to (-1,-1);
      \draw (b) to[out=180,in=90] (0,-1);
      \draw (b) to[out=0,in=90] (2,-1);
    \end{pic}
    =
    \begin{pic}[yscale=.4,xscale=-.4]
      \node[dot] (t) at (0,1) {};
      \node[dot] (b) at (1,0) {};
      \draw (t) to +(0,1);
      \draw (t) to[out=0,in=90] (b);
      \draw (t) to[out=180,in=90] (-1,0) to (-1,-1);
      \draw (b) to[out=180,in=90] (0,-1);
      \draw (b) to[out=0,in=90] (2,-1);
    \end{pic}
  \quad
  \begin{pic}[scale=.4]
    \node[dot] (d) {};
    \draw (d) to +(0,1);
    \draw (d) to[out=0,in=90] ++(1,-1) to ++(0,-1);
    \draw (d) to[out=180,in=90] ++(-1,-1) node[dot] {};
  \end{pic}
  =
  \begin{pic}[scale=.4]
    \draw (0,0) to (0,3);
  \end{pic}
  =
  \begin{pic}[yscale=.4,xscale=-.4]
    \node[dot] (d) {};
    \draw (d) to ++(0,1);
    \draw (d) to[out=0,in=90] ++(1,-1) to ++(0,-1);
    \draw (d) to[out=180,in=90] ++(-1,-1) node[dot] {};
  \end{pic}
  \quad
    \begin{pic}[xscale=.4,yscale=-.4]
      \node[dot] (t) at (0,1) {};
      \node[dot] (b) at (1,0) {};
      \draw (t) to ++(0,1);
      \draw (t) to[out=0,in=90] (b);
      \draw (t) to[out=180,in=90] (-1,0) to (-1,-1);
      \draw (b) to[out=180,in=90] (0,-1);
      \draw (b) to[out=0,in=90] (2,-1);
    \end{pic}
    =
    \begin{pic}[yscale=-.4,xscale=-.4]
      \node[dot] (t) at (0,1) {};
      \node[dot] (b) at (1,0) {};
      \draw (t) to ++(0,1);
      \draw (t) to[out=0,in=90] (b);
      \draw (t) to[out=180,in=90] (-1,0) to (-1,-1);
      \draw (b) to[out=180,in=90] (0,-1);
      \draw (b) to[out=0,in=90] (2,-1);
    \end{pic}
  \quad
  \begin{pic}[scale=.4,yscale=-1]
    \node[dot] (d) {};
    \draw (d) to ++(0,1);
    \draw (d) to[out=0,in=90] ++(1,-1) to ++(0,-1);
    \draw (d) to[out=180,in=90] ++(-1,-1) node[dot] {};
  \end{pic}
  =
  \begin{pic}[scale=.4]
    \draw (0,0) to (0,3);
  \end{pic}
  =
  \begin{pic}[yscale=-.4,xscale=-.4]
    \node[dot] (d) {};
    \draw (d) to ++(0,1);
    \draw (d) to[out=0,in=90] ++(1,-1) to ++(0,-1);
    \draw (d) to[out=180,in=90] ++(-1,-1) node[dot] {};
  \end{pic}
  \]
  \[
  \begin{pic}[scale=.4]
    \node[dot] (d) at (-0.75,1) {};
    \node[dot] (e) at (-0.75,-1) {};
    \draw (d) to ++(0,1);
    \draw (d) to[out=210,in=90] ++(-.75,-1) to[out=-90,in=150] (e);
    \draw (e) to ++(0,-1);
    \node[dot] (d1) at (0.75,1) {};
    \node[dot] (e1) at (0.75,-1) {};
    \draw (d1) to[out=-30,in=90] ++(.75,-1) to[out=-90,in=30] (e1);
    \draw (e1) to ++(0,-1);
    \draw (d1) to ++(0,1);
    \draw (d) to (e1);
    \draw (e) to (d1);
  \end{pic}
  =
  \begin{pic}[scale=.4]
    \node[dot] (r) at (0,.75) {};
    \node[dot] (d) at (0,-.75) {};
    \draw (d) to (r);
    \draw (d) to[out=0,in=90] ++(1,-1);
    \draw (d) to[out=180,in=90] ++(-1,-1);
    \draw (r) to[out=0,in=-90] ++(1,1);
    \draw (r) to[out=180,in=-90] ++(-1,1);
  \end{pic} \qquad\qquad
  \begin{pic}[scale=.4]
    \node[dot] (d) at (0,1) {};
    \node[dot] (e) at (0,-1) {};
    \draw (d) to ++(0,1);
    \draw (d) to[out=0,in=90] ++(1,-1)  node[morphism,scale=.5] {\large $i$} to[out=-90,in=0] (e);
    \draw (d) to[out=180,in=90] ++(-1,-1) to[out=-90,in=180] (e);
    \draw (e) to ++(0,-1);
  \end{pic}
  =
  \begin{pic}[scale=.4]
    \node[dot] (r) at (0,1) {};
    \node[dot] (d) at (0,-1) {};
    \draw (d) to ++(0,-1);
    \draw (r) to ++(0,1);
  \end{pic}
  =
  \begin{pic}[scale=.4,xscale=-1]
    \node[dot] (d) at (0,1) {};
    \node[dot] (e) at (0,-1) {};
    \draw (d) to +(0,1);
    \draw (d) to[out=0,in=90] ++(1,-1)  node[morphism,scale=.5] {\large $i$} to[out=-90,in=0] (e);
    \draw (d) to[out=180,in=90] ++(-1,-1) to[out=-90,in=180] (e);
    \draw (e) to ++(0,-1);
  \end{pic}
 \qquad\qquad
    \begin{pic}[scale=.4]
    \node[dot] (d) {};
    \draw (d) to ++(0,1);
    \draw (d) to[out=0,in=90] ++(1,-1) to ++(0,-1);
    \draw (d) to[out=180,in=90] ++(-1,-1) node[state,scale=0.5] {};
  \end{pic}
  =
  \begin{pic}[scale=.4]
    \node[state,scale=0.5] (r) at (0,1) {};
    \node[dot] (d) at (0,-1) {};
    \draw (d) to ++(0,-1);
    \draw (r) to ++(0,1);
  \end{pic}
  =
  \begin{pic}[yscale=.4,xscale=-.4]
    \node[dot] (d) {};
    \draw (d) to ++(0,1);
    \draw (d) to[out=0,in=90] ++(1,-1) to ++(0,-1);
    \draw (d) to[out=180,in=90] ++(-1,-1) node[state,scale=0.5] {};
  \end{pic}\]
  \caption{Local ingredients of OTP and the axioms they obey \vspace{-1em}}
  \label{fig:hopf}
\end{figure}

The OTP protocol is then depicted in \cref{fig:OTP}, \ie Alice adds the key to her message, broadcasts it to Eve and Bob. Eve deletes her part and Bob adds the inverse of the key to the ciphertext to 
recover the message.

To show that the protocol is secure, note that Eve has an initial attack given by just reading the ciphertext. The pictorial security proof is depicted in~\cref{fig:securityofOTP}.
\begin{figure}
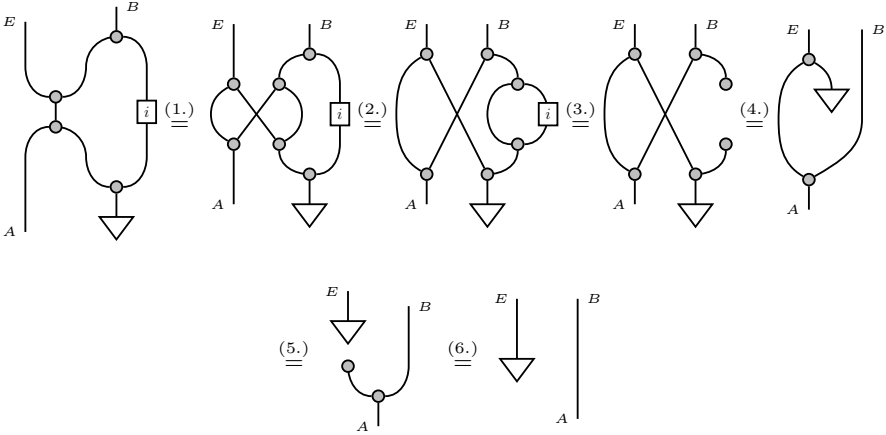

    \centering
    \[
  \begin{pic}[scale=.4]
    \node[dot] (d) {};
    \draw (d) to +(0,-1) node[state,scale=0.5] {};
    \draw (d) to[out=0,in=-90] ++(1,1) to ++(0,1.5) node[morphism,scale=.5] (i) {\large $i$};
    \draw (d) to[out=180,in=-90] ++(-1,1) to[out=90,in=0] ++(-1,1) node[dot] (e) {};
    \draw (e) to[out=180,in=90] ++(-1,-1) to  ++(0,-2.5) node[left] {$A$};
    \draw (e) to ++(0,1) node[dot] (f) {};
    \draw (f) to[out=180,in=-90] ++(-1,1)  to ++(0,1.5) node[left] {$E$};
    \draw (f) to[out=0,in=-90] ++(1,1) to[out=90,in=180] ++(1,1) node[dot] (g) {};
    \draw (g) to +(0,1) node[right] {$B$};
    \draw (g) to[out=0,in=90] ++(1,-1) to (i);
  \end{pic}
   \overset{(1.)}{=}
    \begin{pic}[scale=.4]
    \node[dot] (d) at (-0.75,1) {};
    \node[dot] (e) at (-0.75,-1) {};
    \draw (d) to ++(0,1) to ++(0,1) node[left] {$E$} ;
    \draw (d) to[out=210,in=90] ++(-.75,-1) to[out=-90,in=150] (e);
    \draw (e) to ++(0,-1)  to ++(0,-1) node[left] {$A$};
    \node[dot] (d1) at (0.75,1) {};
    \node[dot] (e1) at (0.75,-1) {};
    \draw (d) to (e1);
    \draw (e) to (d1);
    \draw (d1) to[out=-30,in=90] ++(.75,-1) to[out=-90,in=30] (e1);
    \draw (e1) to[out=-90,in=180] ++(1,-1) node[dot] (f) {};
    \draw (d1) to[out=90,in=180] ++(1,1) node[dot] (g) {};
    \draw (f) to ++(0,-1) node[state,scale=.5] {};
    \draw (f) to[out=0,in=-90] ++(1,1) to ++(0,1) node[morphism,scale=.5] (i) {\large $i$};
    \draw (g) to[out=0,in=90] ++(1,-1) to (i);
    \draw (g) to ++(0,1) node[right] {$B$};
  \end{pic}
   \overset{(2.)}{=}
    \begin{pic}[scale=.4]
    \node[dot] (d) at (1,1) {};
    \node[dot] (e) at (1,-1) {};
    \node[dot] (a) at (-2,2) {};
    \node[dot] (b) at (-2,-2) {};
    \draw (d) to[out=90,in=0] ++(-1,1) node[dot] (f) {};
    \draw (d) to[out=0,in=90] ++(1,-1)  node[morphism,scale=.5] {\large $i$} to[out=-90,in=0] (e);
    \draw (d) to[out=180,in=90] ++(-1,-1) to[out=-90,in=180] (e);
    \draw (e) to[out=-90,in=0] ++(-1,-1) node[dot] (g) {};
    \draw (f) to +(0,1) node[right] {$B$};
    \draw (f) to (b);
    \draw (g) to +(0,-1) node[state,scale=.5] {};
    \draw (g) to (a);
    \draw (a) to +(0,1) node[left] {$E$};
    \draw (b) to +(0,-1) node[left] {$A$};
    \draw (a) to[in=90,out=210] ++(-1,-2) to[in=150,out=-90] (b);
  \end{pic}
  \overset{(3.)}{=}
      \begin{pic}[scale=.4]
    \node[dot] (d) at (1,1) {};
    \node[dot] (e) at (1,-1) {};
    \node[dot] (a) at (-2,2) {};
    \node[dot] (b) at (-2,-2) {};
    \draw (d) to[out=90,in=0] ++(-1,1) node[dot] (f) {};
    \draw (e) to[out=-90,in=0] ++(-1,-1) node[dot] (g) {};
    \draw (f) to +(0,1) node[right] {$B$};
    \draw (f) to (b);
    \draw (g) to +(0,-1) node[state,scale=.5] {};
    \draw (g) to (a);
    \draw (a) to +(0,1) node[left] {$E$};
    \draw (b) to +(0,-1) node[left] {$A$};
    \draw (a) to[in=90,out=210] ++(-1,-2) to[in=150,out=-90] (b);
  \end{pic}
  \overset{(4.)}{=}
    \begin{pic}[scale=.4]
    \node[dot] (d) at (0,2) {};
    \node[dot] (e) at (0,-2) {};
    \draw (d) to +(0,1) node[left] {$E$};
    \draw (d) to[out=-30,in=90] ++(.75,-1)  node[state,scale=.5] {};
    \draw (d) to[in=90,out=210] ++(-1,-2) to[in=150,out=-90] (e);
    \draw (e) to +(0,-1) node[left] {$A$};
    \draw (e) to[out=30,in=-90] ++(1.75,2) to ++(0,3) node[right] {$B$};
  \end{pic}
\]
\[
 \overset{(5.)}{=}
   \begin{pic}[scale=.4]
  \node[state,scale=.5] (d) at (0,1.5) {};
    \draw (d) to +(0,1) node[left] {$E$};
    \node[dot] (a) at (1,-1) {};
    \draw (a) to[out=180,in=-90] ++(-1,1) node[dot] {};
    \draw (a) to ++(0,-1) node[left] {$A$};
    \draw (a) to[out=0,in=-90] ++(1,1) to ++(0,2) node[right] {$B$};
  \end{pic}
 \overset{(6.)}{=}
 \begin{pic}[scale=.4]
    \draw (0,0) node[left] {$A$} to (0,4) node[right] {$B$};
    \node[state,scale=.5] (d) at (-2,2) {};
    \draw (d) to +(0,2) node[left] {$E$};
  \end{pic}
  \]
  \caption{Security proof of OTP}
  \label{fig:securityofOTP}
\end{figure}
The first equation is the interaction between multiplication and copying, the second uses (co)associativity, the third one properties of inverses, the fourth and last one use unitality, and the fifth one follows from the key being random. Taken together, these show that Eve's initial attack is equal to her just producing a random message herself with Alice and Bob sharing the target resource. The correctness of the protocol can be proven similarly. Thus OTP gives a map $\text{shared key}\otimes\text{authenticated channel}\to\text{secure channel}$ that is secure against Eve.

We now use this example to illustrate the use of the composition theorems. A major drawback of OTP, despite its perfect security, is the fact that one needs a key that is as long as the message. In practice, Alice and Bob might only share a short key and wish to promote it a long key. If they agree on a pseudo-random number generator (PRNG) with their key as the seed, they can map the short key to a longer key. If the PRNG is computationally secure, then the end-result is (computationally) indistinguishable from a long key, depicted in \cref{fig:PRNG},
where $\approx$ stands for computational indistinguishability. We envision the computational security of the chosen PRNG to be proven ``the usual way'' and not graphically---after all, we believe that our framework is there to supplement ordinary cryptographic reasoning and not to replace it. The PRNG then results in a (computationally) secure way of promoting a short shared key into a long shared key, and then the composition theorems guarantee that these protocols can be composed, resulting in the security of the stream cipher. 

Composable security is a stronger constraint than stand-alone security, and indeed many cryptographic functionalities are known to be impossible to achieve ``in the plain model'', \ie without set-up assumptions. A case in point is bit commitment, which was shown to be impossible in the UC-framework in~\cite{CF01}. This result was later generalized in~\cite{PR08} to show that any two-party functionality that can be realized in the plain UC-framework is ``splittable''. While the authors of~\cite{PR08} remark that their result applies more generally than just to the UC-framework, this wasn't made precise until~\cite{MR11}\footnote{Except that in their framework the 2-party case seems to require security constraints also when both parties cheat.}. We present a categorical proof of this result in our framework, which promotes the pictures ``illustrating the proof'' in~\cite{PR08} into a full proof---the main difference is that in~\cite{PR08} the pictures explicitly keep track of an environment trying to distinguish between different functionalities, whereas we prove our result in the case of perfect security and then deduce the asymptotic claim.

We now assume that $\CC$, our ambient category of interactive computations is compact closed\footnote{We do not view this as overtly restrictive, as many theoretical models of concurrent interactive (probabilistic/quantum) computation are compact closed~\cite{winskel:game,clairambault:gamesforquantum,clairambaultetal:gamesforquantum2}.}. As we are in the 2-party setting, we take our free computations to be given by $\CC^2$, and we consider two attack models: one where Alice cheats and Bob is honest, and one where Bob cheats and Alice is honest. We think of $\tinycup$  as representing a two-way communication channel, but this interpretation is not needed for the formal result.
\begin{theorem}\label{thm:bipartite} For Alice and Bob (one of whom might cheat),
if a bipartite functionality $r$ can be securely realized from a communication channel between them, \ie from $\tinycup$, then there is a $g$ such that
\begin{equation}\label{eq:splittable}\tag{$*$}
\begin{pic}
    \node[state] (x) at (0,0) {$r$};
    \draw (x.A) to +(0,.4) node[left] {$A$};
    \draw (x.B) to +(0,.4) node[right] {$B$};
  \end{pic}=\begin{pic}\setlength\morphismheight{3mm}
  \setlength\minimummorphismwidth{3mm}
    \node[state] (x) at (0,0) {$r$};
     \node[state] (y) at (1,0) {$r$};
    \node[morphism] (f) at (0.5,0.4) {$g$};
    \draw (x.A) node[left] {$$} to +(0,.6);
    \draw (x.B) to node[right] {} ([xshift=-4.25pt] f.south west);
    \draw (y.A) to node[right] {} ([xshift=4.25pt] f.south east);
    \draw (y.B) to  node[right] {$$} +(0,.6);
  \end{pic}.
  \end{equation}
\end{theorem}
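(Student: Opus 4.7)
The plan is to extract explicit simulators from the two security hypotheses and then assemble the witness $g$ from them using the counit of the compact closed structure; the only non-routine tool needed is the snake equation.

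First I would unpack the hypothesis that $r$ is securely realized from $\tinycup$. This gives a protocol $(f_A,f_B)$ with correctness $(f_A\otimes f_B)\circ\tinycup=r$. By the discussion preceding the theorem, the initial attack in the model where only Alice cheats is the dummy adversary $\id[X]\otimes f_B$, and similarly for the Bob-cheats model the initial attack is $f_A\otimes\id[X]$. Applying the theorem on initial attacks to each of the two attack models yields simulators $\sigma_A\colon A\to X$ and $\sigma_B\colon B\to X$ satisfying
\begin{align*}
(\id[X]\otimes f_B)\circ\tinycup &= (\sigma_A\otimes\id[B])\circ r,\\
(f_A\otimes\id[X])\circ\tinycup &= (\id[A]\otimes\sigma_B)\circ r,
\end{align*}
where $X$ denotes the object carried by each leg of $\tinycup$.

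Next I would take $g:=\tinycap\circ(\sigma_B\otimes\sigma_A)\colon B\otimes A\to I$ and verify $(*)$ by a direct diagrammatic computation. Substituting the definition of $g$ into the right-hand side of $(*)$ and then pushing each $\sigma$ into the adjacent copy of $r$ via the two security equations above turns the right-hand side into $(f_A\otimes\tinycap\otimes f_B)\circ(\tinycup\otimes\tinycup)$. The snake equation collapses the two cups joined by a cap in the middle into a single cup, leaving $(f_A\otimes f_B)\circ\tinycup$, which equals $r$ by correctness.

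I do not anticipate a serious obstacle: the conceptual content is concentrated in the choice of $g$ as the cap of the two simulators, while the verification itself is a routine manipulation of compact closed string diagrams. The only point worth double-checking is that the initial-attack reduction genuinely applies to both attack models simultaneously, so that the two simulators $\sigma_A$ and $\sigma_B$ exist independently and can be composed in parallel inside $g$.
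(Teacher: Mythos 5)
Your proposal is correct and is essentially the paper's proof: the same two simulators are extracted from the dummy (initial) attacks for each cheating party, the witness is the same $g=\tinycap\circ(s_B\otimes s_A)$, and the verification rests on exactly the same snake-equation step combined with correctness $(f_A\otimes f_B)\circ\tinycup=r$. The only cosmetic difference is that you run the chain of equalities backwards (substituting $g$ and reducing to $r$) whereas the paper expands $r$ forward into the split form.
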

\begin{proof}
If a protocol $(f_A, f_B)$ achieves this, security constraints give us $s_A,s_B$ 
    \[\text{such that }\begin{pic}
    \node[morphism,scale=.5,font=\normalsize] (f) at (0,0) {$f_A$};
    \draw (f.north) to +(0,.3);
    \draw (f.south)  to[out=-90,in=-180] ++(0.25,-.25) to[out=0,in=-90] ++(.25,.25) to ++(0,.6);
    \end{pic}=\begin{pic}
     \node[morphism,scale=.5,font=\normalsize] (g) at (.225,.35) {$s_B$};
    \node[state] (x) at (0,0) {$r$};
    \draw (x.A) to +(0,.75) node[right] {};
    \draw (g.south) to node[right] {$$} (x.B);
    \draw (g.north) to +(0,.25) node[right] {};
  \end{pic}
 \quad \text{and} \quad\
  \begin{pic}[xscale=-1]
    \node[morphism,scale=.5,font=\normalsize] (f) at (0,0) {$f_B$};
    \draw (f.north) to +(0,.3);
    \draw (f.south)  to[out=-90,in=-180] ++(0.25,-.25) to[out=0,in=-90] ++(.25,.25) to ++(0,.6);
    \end{pic}=\begin{pic}[xscale=-1]
     \node[morphism,scale=.5,font=\normalsize] (g) at (.225,.35) {$s_A$};
    \node[state] (x) at (0,0) {$r$};
    \draw (x.B) to +(0,.75) node[right] {};
    \draw (g.south) to node[right] {$$} (x.A);
    \draw (g.north) to +(0,.25) node[right] {};
  \end{pic}  \]

   \[ \text{so that} \quad\begin{pic}
    \node[state] (x) at (0,0) {$r$};
    \draw (x.A) to +(0,.3) node[right] {};
    \draw (x.B) to +(0,.3) node[right] {};
  \end{pic}=\begin{pic}
    \node[morphism,scale=.5,font=\normalsize] (f) at (0,0) {$f_A$};
    \node[morphism,scale=.5,font=\normalsize] (g) at (.5,0) {$f_B$};
    \draw (f.north) to +(0,.3);
    \draw (g.north) to +(0,.3);
    \draw (f.south)  to[out=-90,in=180] ++(0.25,-.25) to[out=0,in=-90] (g.south);
    \end{pic}
  =\begin{pic}
    \node[morphism,scale=.5,font=\normalsize] (f) at (0,0) {$f_A$};
    \node[morphism,scale=.5,font=\normalsize] (g) at (1.5,0) {$f_B$};
    \draw (f.north) to +(0,.3);
    \draw (g.north) to +(0,.3);
    \draw (f.south)  to[out=-90,in=-180] ++(.25,-.25) to[out=0,in=-90] ++(.25,.25) to[out=90,in=180] ++(.25,.25) to [out=0,in=90] ++(.25,-.25) to[out=-90,in=180] ++(.25,-.25) to[out=0,in=-90] (g.south);
    \end{pic}%
  =
\begin{pic}
     \node[morphism,scale=.5,font=\normalsize] (g) at (.23,.35) {$s_B$};
    \node[state] (x) at (0,0) {$r$};
    \node[morphism,scale=.5,font=\normalsize] (f) at (0.78,.35) {$s_A$};
    \node[state] (y) at (1,0) {$r$};
    \draw (x.A) to +(0,.75) node[right] {};
    \draw (y.B) to +(0,.75) node[right] {};
    \draw (g.south) to node[right] {$$} (x.B);
    \draw (g.north) to[out=90,in=-180] ++(0.275,.25) to [out=0,in=90] (f.north);
    \draw (f.south) to node[right] {$$} (y.A);
  \end{pic}
  \]
\end{proof}
\begin{corollary}  Given a compact closed $\CC$  modeling  computation in which wires model communication channels, (composable) bit commitment and oblivious transfer are impossible in that model without setup, even asymptotically in terms of distinguisher advantage.
\end{corollary}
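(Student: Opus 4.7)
The plan is to apply Theorem~\ref{thm:bipartite} to bit commitment and oblivious transfer, showing that neither is splittable in the sense of equation~(\ref{eq:splittable}), and then to bootstrap from perfect to asymptotic security using the metric framework of Section~\ref{sec:computationalsecurity}.

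First, I would model bit commitment as a bipartite state $r\colon I\to A\otimes B$ where Alice's port $A$ accepts a bit $b$ in the commit phase and an ``open'' command later, and Bob's port $B$ first emits a ``committed'' acknowledgment and later outputs $b$. If equation~(\ref{eq:splittable}) held for this $r$, then Alice's interaction would factor through her local copy of $r$ and Bob's through his, with the post-processor $g$ doing the only cross-communication. Alice could then instantiate her copy with $0$ in the commit phase and still choose to open to $1$ afterwards, since the ``committed'' token seen by Bob is produced by Bob's own copy of $r$ and is therefore independent of Alice's actual bit — breaking binding. A symmetric argument on hiding, or on sender/receiver privacy for the OT case, rules out $1$-out-of-$2$ oblivious transfer.

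Second, I would promote this to the asymptotic setting by retracing the proof of Theorem~\ref{thm:bipartite} with equalities replaced by $\varepsilon$-approximate equalities, exploiting additive composition of errors (Lemma~\ref{lem:epsilonsafe}) together with the composition theorems for convergent protocols (Theorem~\ref{thm:metriccomposition} and Corollary~\ref{cor:metricsimultaneoussafety}). A family of protocols achieving $r$ with distinguishing advantage $\varepsilon_n\to 0$ would yield maps $g_n$ witnessing that $r$ is $O(\varepsilon_n)$-close to $g_n\circ(r\otimes r)$, placing $r$ in the closure of the splittable states. Combined with an explicit distinguishing environment against splittable states — e.g., for commitment, opening to a bit different from the one committed; for OT, checking consistency of Alice's inputs against Bob's output — this would give a constant lower bound on the advantage, contradicting $\varepsilon_n\to 0$.

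The main obstacle will be this last step: constructing a distinguisher whose advantage is bounded below by an absolute positive constant, in the abstract compact closed setting, against \emph{any} splittable $r$. In concrete probabilistic or quantum categories this is a short calculation based on the fact that the two copies of $r$ in the splittable form share no information about Alice's input. In the abstract setting one must either assume just enough extra structure on $\C$ (for instance distinguished classical wires admitting copy and compare) to even state what ``bit commitment'' and ``oblivious transfer'' are, or use the compact closed structure itself to build the distinguisher diagrammatically; either way, some additional bookkeeping beyond what has been developed so far will be required to turn the splittability obstruction into a quantitative one.
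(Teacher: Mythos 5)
Your proposal follows essentially the same route as the paper: apply Theorem~\ref{thm:bipartite}, observe that a splittable state cannot implement commitment because Bob's ``committed'' acknowledgment comes from his own copy of $r$ independently of Alice's bit, and then defeat the asymptotic claim by exhibiting a distinguisher (commit a random bit, check the reveal) whose advantage against any splittable state is bounded below by the constant $1/2$. The ``main obstacle'' you flag --- making the quantitative distinguishing argument precise in the abstract compact closed setting --- is real, but the paper resolves it at exactly the informal level you describe for concrete probabilistic or quantum models, so your account is, if anything, more candid than the published proof rather than missing anything it contains.
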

\begin{proof}
If $r$ represents bit commitment from Alice to Bob, it does not satisfy the equation required by Theorem~\ref{thm:bipartite} for any $g$, and the two sides of~\eqref{eq:splittable} can be distinguished efficiently with at least probability $1/2$. Indeed, take any $f$ and let us compare the two sides of~\eqref{eq:splittable}: if the distinguisher commits to a random bit $b$, then Bob gets a notification of this on the left hand-side, so that $f$ has to commit to a bit on the right side of~\eqref{eq:splittable} to avoid being distinguished from the left side. But this bit coincides with $b$ with probability at most $1/2$, so that the difference becomes apparent at the reveal stage. The case of OT is similar.
\end{proof}
We now discuss a similar result in the tripartite case, which rules out building a broadcasting channel from pairwise channels securely against any single party cheating. In~\cite{MMP+18} comparable pictures are used to illustrate the official, symbolically rather involved, proof, whereas in our framework the pictures are the proof. Another key difference is that~\cite{MMP+18} rules out broadcasting directly, whereas we show that any tripartite functionality realizable from pairwise channels satisfies some equations, and then use these equations to rule out broadcasting.

Formally, we are working with the resource theory given by $\CC^3\xrightarrow{\otimes}\CC\xrightarrow{\hom(I,-)}\Set$ where $\CC$ is an SMC, and reason about protocols that are secure against three kinds of attacks: one for each party behaving dishonestly while the rest obey the protocol. Note that we do not need to assume compact closure for this result, and the result goes through for any state on $A\otimes A$ shared between each pair of parties: we will denote such a state by $\tinycup$ by convention.
\begin{theorem}  If a tripartite functionality $r$ can be realized from each pair of parties sharing a state
 $\tinycup$, securely against any single party, then there are simulators $s_A,s_B,s_C$ such that
\[\begin{pic}
    \setlength\minimumstatewidth{15mm}
    \node[state] (r) at (0,0) {$r$};
    \node[morphism,scale=.5,font=\normalsize] (f) at (-.5,.35) {$s_A$};
    \draw ([xshift=-3.5pt]r.A) to (f.south);
    \draw ([xshift=3.5pt]r.B) to +(0,.75);
    \draw ([xshift=-.5pt]f.north west) to +(0,.25);
    \draw ([xshift=.5pt]f.north east) to +(0,.25);
    \draw (r.center) to +(0,.75);
  \end{pic}\quad=\quad
  \begin{pic}
    \setlength\minimumstatewidth{15mm}
    \node[state] (r) at (0,0) {$r$};
    \node[morphism,scale=.5,font=\normalsize] (h) at (0,.35) {$s_B$};
    \draw ([xshift=-3.5pt]r.A) to +(0,.75);
    \draw ([xshift=3.5pt]r.B) to +(0,.75);
    \draw ([xshift=-.5pt]h.north west) to +(0,.25);
    \draw ([xshift=.5pt]h.north east) to +(0,.25);
    \draw (r.center) to (h.south);
  \end{pic}\quad=\quad
  \begin{pic}
    \setlength\minimumstatewidth{15mm}
    \node[state] (r) at (0,0) {$r$};
    \node[morphism,scale=.5,font=\normalsize] (g) at (.5,.35) {$s_C$};
    \draw ([xshift=-3.5pt]r.A) to +(0,.75);
    \draw ([xshift=3.5pt]r.B) to (g.south);
     \draw ([xshift=-.5pt]g.north west) to +(0,.25);
     \draw ([xshift=.5pt]g.north east) to +(0,.25);
    \draw (r.center) to +(0,.75);
  \end{pic}\ .
  \]
\end{theorem}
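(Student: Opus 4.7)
The plan is to follow the template of the proof of \cref{thm:bipartite}: use the honest realisation of $r$ together with the three security conditions (one per cheating party) and read off the claimed equations graphically. Suppose $(f_A, f_B, f_C)$ realises $r$, so that
\[r = (f_A\otimes f_B\otimes f_C)\circ(\tinycup\otimes\tinycup\otimes\tinycup),\]
where the three $\tinycup$s are shared by the three distinct pairs of parties.

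For each $X\in\{A,B,C\}$, the ``dummy adversary'' on $X$'s side --- that is, $X$ leaving her two cup-ends dangling instead of running $f_X$ --- is an initial attack in the model ``only $X$ cheats'', just as in the bipartite case. Applying security of the protocol against this initial attack therefore supplies a simulator $s_X$ satisfying
\[\bigl(\text{honest protocol with } f_X \text{ replaced by } \id\bigr)\;=\;\bigl(r \text{ with } s_X \text{ attached to } X\text{'s output port}\bigr),\]
in complete analogy with the two security rewrites used in the bipartite proof. Each of the three pictures in the conclusion is precisely the right-hand side of one of these rewrites, so by the rewrites each equals a real-world diagram in which one of the three honest $f_X$'s has been removed.

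To chain the three into a single equality, one uses the honest identity above to reinsert the missing $f_X$ on one side and simultaneously remove a different $f_Y$ on the other side via the shared cups; this mirrors the bipartite manipulation in which $f_A$ attached to one end of a cup was converted into $s_B$ on $r$'s $B$-port, with the roles of the cups between the remaining pairs of parties playing the same structural role here.

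The main obstacle is simply the wire bookkeeping: with three cups arranged in a triangle, one must carefully track which pair of cup-ends is held by which party as we slide between the three dummy-attack scenarios. No new categorical ingredients beyond \cref{thm:composition} and the initial-attack observation are needed; in particular, compact closure is \emph{not} invoked, since $\tinycup$ is treated as an arbitrary state on $A\otimes A$ rather than a compact unit. With the wires carefully labelled, the graphical calculus lets the pictures themselves serve as the proof, exactly as in \cref{thm:bipartite}.
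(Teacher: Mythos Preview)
There is a genuine gap. Your dummy-adversary rewrites do produce three simulators $s_A,s_B,s_C$, but the three real-world sides they are equated to are three \emph{different} morphisms: ``honest $f_B,f_C$ with Alice's wires dangling'', ``honest $f_A,f_C$ with Bob's wires dangling'', and ``honest $f_A,f_B$ with Charlie's wires dangling''. To conclude the three ideal-world pictures coincide you would need these three real-world morphisms to be equal, and your proposed ``reinsert $f_X$ and remove $f_Y$ via the shared cups'' step does not do this. In the bipartite proof that manoeuvre relied on the snake equation of compact closure to split the single cup into two independent halves; here you explicitly (and correctly) note that compact closure is not assumed, so the analogous separation is unavailable. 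Post-composing a dummy-Alice picture with $f_A$ just recovers $r$, not the dummy-Bob picture.

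The paper's argument avoids this entirely by a different idea: it writes down a \emph{single} morphism in $\C$---the honest protocol with an extra copy of $f_B$ spliced in, so that the chain reads $f_A\text{--}f_B\text{--}f_B\text{--}f_C$ along four cups---and observes that this one morphism is simultaneously (i) an attack by a dishonest Alice (who runs $f_A$ and the first $f_B$ herself), (ii) an attack by a dishonest Bob (who runs $f_B$ twice), and (iii) an attack by a dishonest Charlie (who runs the second $f_B$ and $f_C$). Security against each party then equates this common morphism to each of the three simulator pictures, giving the three-way equality directly. The attacks used are thus deliberately \emph{non}-dummy; choosing them so that all three yield the same global diagram is the key step you are missing.
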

\begin{proof}
Any tripartite protocol building on top of each pair of parties sharing $\tinycup$  can be drawn as in the left side of
\[
\begin{pic}
    \setlength\morphismheight{3mm}
    \node[morphism] (f) at (0,0) {$f_A$};
    \node[morphism] (g) at (1,0) {$f_B$};
    \node[morphism] (h) at (2,0) {$f_C$};
    \draw (f.north) to +(0,.3);
    \draw (g.north) to +(0,.3);
    \draw (h.north) to +(0,.3);
    \draw (f.south east)  to[out=-90,in=-90] (g.south west);
    \draw (g.south east)  to[out=-90,in=-90] (h.south west);
    \draw (f.south west)  to[out=-90,in=-90] (h.south east);
    \end{pic} \qquad\qquad
\begin{pic}
    \setlength\morphismheight{3mm}
    \coordinate (a) at (0.5,-.9);
    \node[morphism] (e) at (-1,0) {$f_A$};
    \node[morphism] (f) at (0,0) {$f_B$};
    \node[morphism] (g) at (1,0) {$f_B$};
    \node[morphism] (h) at (2,0) {$f_C$};
    \draw (e.north) to +(0,.3);
    \draw (f.north) to +(0,.3);
    \draw (g.north) to +(0,.3);
    \draw (h.north) to +(0,.3);
    \draw (e.south east)  to[out=-90,in=-90] (f.south west);
    \draw (f.south east)  to[out=-90,in=-90] (g.south west);
    \draw (g.south east)  to[out=-90,in=-90] (h.south west);
    \draw (e.south west)  to[out=-90,in=180] (a) to[out=0,in=-90] (h.south east);
    \end{pic}
   \]
Consider now the morphism in $\CC$ depicted on the right: it can be seen as the result of three different attacks on the protocol $(f_A,f_B,f_C)$ in $\CC^3$: one where Alice cheats and performs $f_A$ and $f_B$ (and the wire connecting them), one where Bob performs $f_B$ twice, and one where Charlie performs $f_B$ and $f_C$. The security of $(f_A,f_B,f_C)$ against each of these gives the required simulators.
\end{proof}
\begin{corollary}  Given a SMC $\CC$ modeling interactive computation, and a state $\tinycup$ on $A\otimes A$ modeling pairwise communication, it is impossible  to build broadcasting channels securely (even asymptotically in terms of distinguisher advantage) from pairwise channels.
\end{corollary}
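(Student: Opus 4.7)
The plan is to specialise the previous theorem to the broadcast functionality. I would model broadcast as the tripartite state $r_{\mathrm{bc}}$ on Alice's, Bob's and Charlie's interfaces in which Alice supplies a bit and Bob and Charlie each receive a copy, so that $r_{\mathrm{bc}}$ carries a ``copy'' structure linking Alice's input to Bob's and Charlie's outputs. If broadcast were securely realizable from pairwise $\tinycup$'s against any single cheater, the preceding theorem would supply simulators $s_A, s_B, s_C$ satisfying the three displayed equations; in particular, the $s_A$-equation would force Bob's and Charlie's ideal-world outputs under any simulated cheating by Alice to be derived from a \emph{single} bit passed into $r_{\mathrm{bc}}$ through $s_A$, and hence to always agree.

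Next I would exhibit a real-world attack by a malicious Alice that cannot be matched in the ideal world, following the classical three-party Byzantine impossibility of Lamport--Shostak--Pease: Alice runs two internal copies of $f_A$, one ``with virtual input $0$'' whose wire to Bob is hooked to the true Alice--Bob $\tinycup$, and one ``with virtual input $1$'' whose wire to Charlie is hooked to the true Alice--Charlie $\tinycup$, arranging her behaviour on the remaining ports in any consistent way. Because $f_B$ and $f_C$ are local to their own wires, Bob's view coincides with the honest all-$0$ execution and Charlie's with the honest all-$1$ execution, so correctness of the protocol forces Bob to output $0$ and Charlie to output $1$ with probability one.

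A distinguisher-style environment that outputs $1$ iff Bob's and Charlie's outputs differ therefore separates the real attack from any ideal simulation through $s_A$ by a constant advantage, independent of any security parameter, ruling out even asymptotic security. The main obstacle is making the two-faced Alice attack into a formally valid morphism in $\C$ without fixing a concrete machine model; I plan to handle this purely graphically, exactly as in the proofs of the preceding theorem and its bipartite counterpart, expressing the attack as a composite of two copies of $f_A$ with the two relevant cups so that the impossibility reduces to an equational clash between the first simulator equation and the copy-structure of $r_{\mathrm{bc}}$.
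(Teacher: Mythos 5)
There is a genuine gap, and it sits exactly at the step your whole contradiction rests on: the claim that under the two-faced attack ``Bob's view coincides with the honest all-$0$ execution and Charlie's with the honest all-$1$ execution''. In the setting of the theorem \emph{every} pair of parties shares a channel $\tinycup$, so Bob and Charlie also talk to each other; Bob's view includes Charlie's messages, which are produced by a Charlie interacting with your second copy of $f_A$ (virtual input $1$), so Bob's view is not that of any honest execution, and symmetrically for Charlie. The phrase ``because $f_B$ and $f_C$ are local to their own wires'' ignores the Bob--Charlie cup. Worse, the overall strategy cannot be repaired while invoking only security against a cheating broadcaster: consider the folklore echo protocol in which Bob and Charlie forward to each other the bit they received from Alice and output it if the echoes agree, defaulting to $0$ otherwise. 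This realizes broadcast securely against a corrupt Alice (their outputs always agree and are a simulatable function of Alice's messages), yet your attack plus the ``outputs differ'' distinguisher would declare it insecure --- under your attack both parties detect the mismatch and output $0$. So any valid proof must also use security against a cheating \emph{receiver}; this is precisely why the classical three-party impossibility needs the Fischer--Lynch--Merritt-style view-duplication argument rather than the naive two-faced sender. (A smaller issue: the simulators $s_A,s_B,s_C$ produced by the preceding theorem certify the equations only for the specific composite attack used in its proof, so for your custom attack you must appeal to the security definition directly; that part is fixable, the view claim is not.)

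The paper's proof is engineered to avoid any claim about views. It takes the single ring-shaped morphism built from $f_A,f_B,f_B,f_C$ joined by cups and observes that this \emph{one} morphism is simultaneously an attack by Alice, by Bob (the broadcaster), and by Charlie; hence the three simulator equations of the theorem all have the same left-hand side. Feeding ``broadcast $0$'' and ``broadcast $1$'' into the two middle interface wires, the three right-hand sides force respectively that Charlie receives $1$, that Alice receives $0$, and that Alice and Charlie receive the same bit --- an equational contradiction, which degrades gracefully (each equation holding up to small distinguisher advantage still cannot hold simultaneously with high probability), ruling out asymptotic security as well. If you want to keep the Lamport--Shostak--Pease flavour, you would have to carry out the full duplication construction and invoke security against each single corrupt party, at which point you essentially reconstruct the paper's four-box argument.
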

\begin{proof} We show that a channel $r$ that enables Bob to broadcast an input bit to Alice and Charlie never satisfies the required equations for any $s_A,s_B,s_C$. Indeed,  assume otherwise and let the environment
plug ``broadcast $0$'' and ``broadcast $1$'' to the two wires in the middle. The leftmost picture then says that Charlie receives $1$, the rightmost picture implies that Alice gets $0$ and the middle picture that Alice and Bob get the same output (if anything at all)---a contradiction. Indeed, one cannot satisfy all of these simultaneously with high probability, which rules out an asymptotic transformation.
\end{proof}
\section{Outlook}\label{sec:outlook}
We have presented a categorical framework providing a general, flexible and mathematically robust way of reasoning about composability in cryptography. Besides contributing a further approach to composable cryptography and potentially helping with cross-talk and comparisons between existing approaches~\cite{CKL+19}, we believe that the current work opens the door for several further questions.

First, due to the generality of our approach we hope that one can, besides honest and malicious participants, reason about more refined kinds of adversaries composably. Indeed, we expect that Definition~\ref{def:attack}
is general enough to capture \eg honest-but-curious adversaries\footnote{Heuristically speaking this is the case: an honest-but-curious attack on $g\circ f$ should be factorizable as one on $g$ and one on $f$, and similarly an honest-but-curious attack on $g\otimes f$ should be factorisable into ones on $g$ and~$f$ that then forward their transcripts to an attack on $\id\otimes\id$.}. It would also be interesting to see if this captures even more general attacks, \eg situations where the sets of participants and dishonest parties can change during the protocol. This might require understanding our axiomatization of attack models more structurally and perhaps generalizing it.  Does this structure (or a variant thereof) already arise in category theory? While we define an attack model on a category, perhaps one could define an attack model on a (strong) monoidal functor~$F$, the current definition being recovered when $F=\id$.

Second, we expect that rephrasing cryptographic questions categorically would enable more cross-talk between cryptography and other fields already using category theory as an organizing principle. For instance, many existing approaches to composable cryptography develop their own models of concurrent, asynchronous, probabilistic and interactive computations. As categorical models of such computation exist in the context of game semantics~\cite{winskel:game,clairambault:gamesforquantum,clairambaultetal:gamesforquantum2}, one is left wondering whether the models of the semanticists' could be used to study and answer cryptographic questions, or conversely if the models developed by cryptographers contain valuable insights for programming language semantics.

Besides working inside concrete models---which ultimately blends into ``just doing composable cryptography''---one could study axiomatically how properties of a category relate to cryptographic properties in it. As a specific conjecture in this direction, one might hope to talk about honest-but-curious adversaries at an abstract level using environment structures~\cite{coecke:environment}, that axiomatize the idea of deleting a system. Similarly, having agents purify their actions is an important tool in quantum cryptography~\cite{LC97}---can categorical accounts of purification~\cite{chiribella:purification,cunningham:purity,coecke:environment} elucidate this?

Finally, we hope to get more mileage out of the tools brought in with the categorical viewpoint. For instance, can one prove further no-go results pictorially? More specifically, given the impossibility results for two and three parties, one wonders if the ``only topology matters'' approach of string diagrams can be used to derive general impossibility results for $n$ parties sharing pairwise channels. Similarly, while diagrammatic languages have been used to reason about positive cryptographic results in the stand-alone setting~\cite{kissinger2017picture,breiner:graphicaldicrypto,breiner:selftesting}, can one push such approaches further now that composable security definitions have a clear categorical meaning? Besides the graphical methods, thinking of cryptography as a resource theory suggests using resource-theoretic tools such as monotones. While monotones have already been applied in cryptography~\cite{WW08:monotones}, a full understanding of cryptographically relevant monotones is still lacking.

\bibliographystyle{splncs04}
\bibliography{bib/full,bib/quantum,bib/new,bib/new2}

\vfill

{\small\medskip\noindent{\bf Open Access} This chapter is licensed under the terms of the Creative Commons\break Attribution 4.0 International License (\url{http://creativecommons.org/licenses/by/4.0/}), which permits use, sharing, adaptation, distribution and reproduction in any medium or format, as long as you give appropriate credit to the original author(s) and the source, provide a link to the Creative Commons license and indicate if changes were made.}

{\small \spaceskip .28em plus .1em minus .1em The images or other
third party material in this chapter are included in the\break
chapter's Creative Commons license, unless indicated otherwise in a
credit line to the\break material.~If material is not included in
the chapter's Creative Commons license and\break your intended use
is not permitted by statutory regulation or exceeds the
permitted\break use, you will need to obtain permission directly
from the copyright holder.}

\medskip\noindent\includegraphics{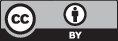}

\end{document}